\DeclareMathOperator{\tr}{tr}
\DeclareMathOperator{\dev}{dev}
\DeclareMathOperator{\sym}{sym}
\DeclareMathOperator{\SO}{\mathrm{SO}}
\DeclareMathOperator{\Curl}{\mathrm{Curl}}
\DeclareMathOperator{\Grad}{\mathrm{Grad}}
\DeclareMathOperator{\curl}{\mathrm{curl}}
\DeclareMathOperator{\vol}{\mathrm{vol}}
\DeclareMathOperator{\Adj}{\mathrm{Adj}}
\DeclareMathOperator{\axl}{\mathrm{axl}}
\DeclareMathOperator{\Axl}{\mathrm{Axl}}
\DeclareMathOperator{\anti}{\mathrm{anti}}
\DeclareMathOperator{\Anti}{\mathrm{Anti}}
\newcommand{\normal}{\nu}
\newcommand{\norm}[1]{\left|\!\left|#1\right|\!\right|}
\newcommand{\R}{\mathbb{R}}
\newcommand{\inv}{^{-1}}
\newcommand{\id}{{\boldsymbol{\mathbbm{1}}}}
\newcommand{\Rdd}{\R^{3\times 3}}
\newcommand{\sprod}[1]{\langle#1\rangle}
\newcommand{\matr}[1]{\begin{pmatrix}#1\end{pmatrix}}
\newcommand{\oR}{\overline{R}}
\newcommand{\oA}{\overline{A}}
\newcommand{\oU}{\overline{U}}
\newcommand{\oK}{\overline{K}}
\newcommand{\fraK}{\mathfrak{K}}
\newcommand{\ein}{\,\rule[-3pt]{0.4pt}{8pt}\,{}}
\newcommand{\phii}{\varphi}
\newcommand{\Om}{\Omega}
\newcommand{\col}{\colon}
\newcommand{\sub}{\subset}
\newcommand{\Rtilde}{\widetilde{R}}
\newtheorem{theorem}{Theorem}
\newtheorem{lemma}[theorem]{Lemma}
\newtheorem{definition}[theorem]{Definition}
\newtheorem{consequence}{Consequence}
\newtheorem{remark}[theorem]{Remark}
\theoremstyle{remark}
\begin{document}
\title{Integrability conditions between the first and second Cosserat deformation tensor in geometrically nonlinear micropolar models and existence of minimizers}
\author{Johannes Lankeit\,%
\thanks{Johannes Lankeit, Institut f\"ur Mathematik, Universit\"at Paderborn,  Warburger Str. 100,
33098 Paderborn, Germany, email: johannes.lankeit@math.uni-paderborn.de}\,\,,\,
Patrizio Neff\,%
\thanks{Corresponding author: Patrizio Neff, Head of Lehrstuhl f\"ur Nichtlineare Analysis und Modellierung, Fakult\"at f\"ur  Mathematik, Universit\"at Duisburg-Essen, Campus Essen, Thea-Leymann Str. 9, 45127 Essen, Germany, email: patrizio.neff@uni-due.de, Tel.: +49-201-183-4243}\,\,,\,
Frank Osterbrink\,%
 \thanks{\,Frank Osterbrink, Fakult\"at f\"ur  Mathematik, Universit\"at Duisburg-Essen, Campus Essen, Thea-Leymann Str. 9, 45127 Essen, Germany, email: frank.osterbrink@uni-due.de}
}


\maketitle
\begin{center}
 {\it In memory of Claude Vall{\'e}e ($\dagger$\!\! November 2014)}
\end{center}

\begin{abstract}
In this note we extend integrability conditions for the symmetric stretch tensor $U$ in the polar decomposition of the deformation gradient $\nabla\varphi=F=R\,U$ to the non-symmetric case. In doing so we recover integrability conditions for the first Cosserat deformation tensor. Let $F=\oR\,\oU$ with $\oR:\Omega\subset\R^3\longrightarrow\SO(3)$ and $\oU:\Omega\subset\R^3\longrightarrow \mathrm{GL}(3)$. Then 
\begin{align*}
	\fraK:=\oR^T\mathrm{Grad}\,\oR=\Anti\left(\frac{\Big[\oU(\Curl \oU)^T-\frac{1}{2}\tr(\oU(\Curl \oU)^T)\id\Big]\oU}{\det \oU}\right),
\end{align*}
 giving a connection between the first Cosserat deformation tensor $\oU$ and the second Cosserat tensor ${\fraK}$. (Here, $\Anti$ denotes an isomorphism between $\R^{3\times 3}$ and $\mathfrak{So}(3):=\{\,\mathfrak{A}\in\R^{3\times 3\times 3}\,|\,\mathfrak{A}.u\in\mathfrak{so}(3)\;\forall u\in \R^3\}$.) The formula shows that it is not possible to prescribe $\oU$ and $\fraK$ independent from each other.

We also propose a new energy formulation of geometrically nonlinear Cosserat models which completely separate the effects of nonsymmetric straining and curvature. For very weak constitutive assumptions (no direct boundary condition on rotations, zero Cosserat couple modulus, quadratic curvature energy) we show existence of minimizers in Sobolev-spaces.
\end{abstract}

\textbf{Keywords:} Cosserat continuum, geometrically nonlinear micropolar elasticity, integrability conditions, compatibility conditions, extended continuum mechanics, strain and curvature measures\\

\textbf{Mathematics Subject Classification (MSC2010):} 74A35, 74A30\\

\section{Introduction}\label{sec1}
Classical continuum mechanics considers material continua as simple point-continua with points having three displacement-degrees of freedom, and the response of a material to the displacement of its points is characterized by a symmetric Cauchy stress tensor presupposing that the transmission of loads through surface elements is uniquely determined by a force vector, neglecting couples. But such a model is insufficient for the description of deformable solids with a certain microstructure, such as cellular materials, foam-like structures (e.g., bones), periodic lattices. That is the point where the geometrically nonlinear Cosserat model comes in.\\

In order to unify field theories embracing mechanics, optics and electro-dynamics by means of a common principle of least action (Euclidean action), the brothers Francois and Eugene Cosserat tried to find the correct general form of the energy for a deformable solid (see \cite{Cosserat09}). In contrast to the classical theory any material point in a Cosserat continuum (also called \emph{micropolar} continuum) is allowed to rotate without stretch, like an infinitesimal rigid body. Thus, in addition to the deformation field ${\varphi}$ an independent rotation field $ \,\overline{{R}}$ is needed to fully describe the generalized continuum. The Cosserat model has therefore 6 degrees of freedom (3 for the displacements and 3 for the rotations of points), in contrast to the classical elastic continuum, which has only 3 degrees of freedom.\\

From a historical perspective, the kinematical model of such generalized continua was introduced by the Cosserat brothers in \cite{Cosserat09}, but they did not provide any constitutive relations. Fifty years later, Ericksen and Truesdell \cite{Ericksen-Truesdell-58} have developed this idea and have drawn anew the attention to the theory of generalized continua. In the 1960s, several variants of the theory of media with microstructure were proposed by Toupin \cite{Toupin64}, Mindlin \cite{Mindlin64}, Eringen \cite{Eringen67} and others. 
The micropolar (or Cosserat) continua can be viewed as a special case of the microstretch continua and of the micromorphic continua (see Eringen \cite{Eringen98} for the foundation of this theory). In the last decades, the generalized theories of continua have proven useful for the treatment of complex mechanical problems, for which the classical theory of elasticity is not satisfactory. In this respect, we mention for instance the works of Lakes \cite{Lakes93,Lakes93b,Lakes98}, Neff \cite{Neff_Cosserat_plasticity05}, Neff and Forest \cite{Neff_Forest07}, among others.\\

The mathematical problem related to the deformation of Cosserat elastic bodies has been investigated in many works. In the linearized micropolar elasticity the existence and the properties of solutions have been studied by Ie\c san \cite{Iesan70,Iesan71,Iesan82} and by Neff \cite{Neff_zamm06,Neff_JeongMMS08} under some weaker assumptions, among others. For the geometrically nonlinear micromorphic model, the first existence theorem based on convexity arguments was presented by Neff \cite{Neff_Edinb06,Neff-PAMM-2004,Neff_Habil04}. Another existence result for generalized continua with microstructure was proved in \cite{Mariano08}, under certain convexity assumptions. Adapting the methods from \cite{Ball77,Dacorogna89}, Tamba\v ca and Vel\v ci\'c \cite{Tambaca10} have derived an existence theorem for nonlinear micropolar elasticity, for a general constitutive behaviour.\\

In our paper we consider a new energy formulation of geometrically nonlinear Cosserat models which completely separates the effects of nonsymmetric straining and curvature. 

In Section \ref{sec2} we introduce the strain measures and curvature strain measures associated to the classical nonlinear Cosserat model. The next section presents the expression of the elastically stored energy density and recalls the main assumptions on the constitutive coefficients.\\

In Section \ref{sec4} we provide an easy and self contained approach towards compatibility conditions in micropolar media. Starting point is Vall{\'e}e's formula (Theorem \ref{th:comp1}) providing compatibility relations generated by the classical polar decomposition. As it turns out, this formula can be generalized to the situation $F=\oR\cdot\oU$, where $\oR$ is still orthogonal, but $\oU$ need not be positive definite and symmetric. Our final compatibility result  is already known. Indeed, Teresi and Tiero (cf. \cite{Teresi97}, eq. (22)) use a much more abstract setting with an extremely short algebraic proof as compared to the classical, componentwise approach (cf. \cite{Kafadar71}). Their result seems not to be very well known and it has not been immediately connected to the classical compatibility result for the symmetric stretch tensor. In our presentation we choose a compact direct tensor notation which is a middle between the two extreme cases above. 

The advantage of Teresi and Tiero's analysis is, however, to make the structure more transparent and to highlight the appearance of the curl of the first Cosserat deformation tensor. This dependence is not easily grasped in the compatibility conditions given in the classical approach (see \cite{Kafadar71} and Eringen's book \cite{Eringen98}, eq. 1.7.4).\\

Having proven these relations between dislocation density $\oK$, strain measure $\oU$ and its derivative in the form of $\Curl\oU$, in Section \ref{sec5} we finally give the new energy formulation and briefly outline a proof for the existence of minimizers.

\subsection{Notation}
In this paper, we denote by $\R^{3\times 3}$ the set of real $3\times 3$ second order tensors, written with
capital letters and the Einstein summation convention over repeated indices is used. For ${a},{b}\in\R^3$ we let $\langle {a},{b}\rangle$ denote the scalar product on $\R^3$ with associated vector norm $\|{a}\|^2=\langle {a},{a}\rangle$. Similarly, the standard Euclidean scalar product on $\R^{3\times 3}$ is given by $\langle X,{Y}\rangle=\tr({X {Y}^T})$, and thus the Frobenius tensor norm is $\|X\|^2=\langle X,X\rangle$. 
The identity tensor on $\R^{3\times 3}$ will be denoted by $\id$, so that $\tr X=\langle X,{\id}\rangle$. We adopt the usual abbreviations of Lie-algebra theory, i.e.,
 $\mathfrak{so}(3):=\{X\in\mathbb{R}^{3\times3}\;|X^T=-X\}$ is the Lie-algebra of  skew symmetric tensors
and $\mathfrak{sl}(3):=\{X\in\mathbb{R}^{3\times3}\;| \tr({X})=0\}$ is the Lie-algebra of traceless tensors. 
$\mathrm{Sym}(3)=\{x\in\Rdd|\,X=X^T\}$ denotes the set of symmetric matrices. 
For all vectors ${\xi},{\eta}\in\R^3$ we have the tensor product $({\xi}\otimes{\eta})_{ij}={\xi}_i\,{\eta}_j$  and ${\epsilon}$ is the Levi-Civita tensor, also called the permutation tensor or third order alternator tensor, given by
\begin{align}
\epsilon_{ijk}=\left\{
\begin{array}{ll}
1& \text{if} \quad (i,j,k) \quad \text{ is an even permutation of} \quad (1,2,3)\\
-1& \text{if} \quad (i,j,k) \quad \text{ is an odd permutation of} \quad (1,2,3)\\
0 & \text{otherwise}.
\end{array}\right.
\end{align}
For all $X\in\mathbb{R}^{3\times 3}$ we set $\sym\,X=\frac{1}{2}(X^T+X)\in\mathrm{Sym}(3)$, 
$\mathrm{skew}\, X=\frac{1}{2}(X-X^T)\in \mathfrak{so}(3)$ and the deviatoric part $\dev X=X-\frac{1}{3}(\tr{X})\!\cdot\!\id\in \mathfrak{sl}(3)$  and we have the \emph{orthogonal Cartan-decomposition of the Lie-algebra} $\mathfrak{gl}(3)$
\begin{align}
	\begin{split}
		\mathfrak{gl}(3)&=\big\{\,\mathfrak{sl}(3)\cap \mathrm{Sym}(3)\,\big\}\,\oplus\,\mathfrak{so}(3)\,\oplus\,\mathbb{R}\!\cdot\! \id\,,\\
		X&=\dev \sym X+ \mathrm{skew} X+\frac{1}{3}\big(\tr X\big)\!\cdot\! \id\,.
	\end{split}
\end{align}

Here, for
\begin{align}
A=\left(\begin{array}{ccc}
0 &-a_3&a_2\\
a_3&0& -a_1\\
-a_2& a_1&0
\end{array}\right)\in \mathfrak{so}(3)
\end{align}
we consider the operators $\mathrm{axl}:\mathfrak{so}(3)\rightarrow\mathbb{R}^3$ and $\mathrm{anti}:\mathbb{R}^3\rightarrow \mathfrak{so}(3)$ given by
\begin{align}\label{eq:not_anti}
\axl\left(\begin{array}{ccc}
0 &-a_3&a_2\\
a_3&0& -a_1\\
-a_2& a_1&0
\end{array}\right):=\left(\begin{array}{c}
a_1\\
a_2\\
a_3
\end{array}\right),\quad \quad A\cdot v=(\axl A)\times v, \quad \quad \mbox{for all } \, v\in\mathbb{R}^3,\notag\\[-12pt] \\\quad A_{ij}=-\epsilon_{ijk}(\axl A)_k=:(\anti(\axl A))_{ij}, \quad \quad (\axl A)_k=-\frac{1}{2} \epsilon_{ijk}A_{ij}\,.\notag
\end{align}
Furthermore the axial operator $\mathrm{Axl}:\mathfrak{So}(3)\longrightarrow\R^{3\times 3}$ associated to an $\mathfrak{so}(3)$-valued third order tensor ${\mathfrak{A}}$ is defined by 
\begin{align}
	\big(\mathrm{Axl}\,{\mathfrak{A}}.u\big)\times v=\big(\mathfrak{A}.u\big)v,\qquad u,v\in\R^3\,,
\end{align}
where $\mathfrak{So}(3):=\{\,\mathfrak{A}\in\R^{3\times 3\times 3}\,|\,\mathfrak{A}.u\in\mathfrak{so}(3)\}$ and we define $\mathrm{Anti}:\R^{3\times 3}\longrightarrow\mathfrak{So}(3)$ to be
\begin{align}\label{eq:defAnti}
	\mathfrak{S}_{ijk}=-\epsilon_{ijl}\big(\mathrm{Axl}\,\mathfrak{S})_{lk}=:\big(\mathrm{Anti}\,\big(\mathrm{Axl}\,\mathfrak{S}\big)\big)_{ijk}.
\end{align}
Then, for $A\in \R^{3\times 3}$ we have
\begin{align*}
 \epsilon:\Anti A &=\big(\epsilon_{ijk}\,e_i\otimes e_j\otimes e_k\big):\big(-\epsilon_{rst}A_{tu}\,e_r\otimes e_s\otimes e_u\big)
 		     =-\epsilon_{ijk}\epsilon_{rst}A_{tu}\delta_{jr}\delta_{ks}\,e_i\otimes e_u\\
 		     &=-\epsilon_{ijk}\epsilon_{jkt}A_{tu}\,e_i\otimes e_u
 		     =-2\,\delta_{it}A_{tu}\,e_i\otimes e_u=2\,A_{iu}\,e_i\otimes e_u=-2A
\end{align*}
and thus, applying this to $A= \mathrm{Axl}\,\mathfrak{A}$, 
\begin{align}\label{eq:Axl}
	(\mathrm{Axl}\,\mathfrak{A})_{ij}=-\frac{1}{2}(\epsilon:\mathfrak{A})_{ij}=-\frac{1}{2}\epsilon_{irs}\,\mathfrak{A}_{rsj}. 
\end{align}
Here, the double dot product `` : '' of two third order tensors ${A}=A_{ijk}\,{e}_i\otimes{e}_j\otimes {e}_k$ and ${B}=B_{ijk}\,{e}_i\,\otimes{e}_j\otimes {e}_k$ is defined as $\,{A}:{B}\,=\, A_{irs}B_{rsj}\,{e}_i\otimes{e}_j\,$.
The cross product ``$\times$'' of two second order tensors is calculated with help of the rule $\,({a}\,\otimes\,{b})\times({c}\,\otimes\,{d})= {a}\,\otimes\,({b}\times{c})\,\otimes\,{d}\,$, which holds for any vectors ${a},{b},{c}$ and ${d}$.\\
Throughout this paper (when we do not specify else) Latin subscripts take the values $1,2,3$. Typical conventions for differential operations are implied such as comma followed by a subscript to denote the partial derivative with respect to the corresponding cartesian coordinate (e.g., $f,_i=\frac{\partial f}{\partial x_i}\,$).

\section{Strain and curvature measures for the classical Cosserat model}\label{sec2}
We consider an elastic body ${\mathcal{B}}=\{\mathcal{P}_k\}$ as a set of particles $\mathcal{P}_k$, which in the reference configuration occupies a bounded domain $\Omega\subset\mathbb{R}^3$ with Lipschitz boundary $\partial\Omega$. Let $ Ox_1x_2x_3$ be a Cartesian coordinate frame in $\mathbb{R}^3$ and ${e}_i$ the unit vectors along the coordinate axes $Ox_i\,$. Then any material point $\mathcal{M}$ in $\mathcal{P}_k$ may be identified by the spatial location $\varphi(x)$ of the centroid $x=(x_1,x_2,x_3)$ of $\mathcal{P}_k$ and its relative position vector ${\xi}(x)$. Furthermore the behaviour of the body under the action of some external loads can be described by the \emph{deformation field} 
\[
\varphi\colon\Omega\longrightarrow\R^3\,,\qquad \varphi(x)=\varphi(x_1,x_2,x_3)
\]
and a second order tensor field $\oR$ (also called \emph{microrotation tensor}) according to
\[
\oR\colon\Omega\longrightarrow\mathrm{SO}(3)\,,\quad {\xi}(x)=\oR(x_1,x_2,x_3)\,{\xi_0}
\]
with some fixed $\xi_0\in\R^3$.\\
We identify the second order tensor $\oR$ with the $3\times 3$ matrix of its components $\,\oR\in \mathbb{R}^{3\times 3}$ and the vector $\,{\varphi}\,$ with the column-vector of its components $\,\varphi\in\mathbb{R}^3$. Notice that $\oR$ is a proper orthogonal tensor, i.e. $\oR\in\SO(3)$. The three columns of the matrix $\oR$ will be denoted by $\,{d}_1\,,\,{d}_2\,,\,{d}_3\,$. They are usually called the \emph{directors} and can be interpreted as an orthonormal triad of vectors $\{{d}_i\}$ rigidly attached to each material point, describing thus the microrotations. Using either the direct tensor notation or the matrix notation, we can write respectively
\[\oR={d}_i\otimes {e}_i\qquad\mathrm{or}\qquad\oR=
\big({d}_1\,|\,{d}_2\,|\,{d}_3\,\big)_{3\times 3}\,\,.\]
In the reference configuration $\Omega$ the directors attached to every material point $(x_1,x_2,x_3)$ are taken to be $\{\, {e}_i\,\}$ and after deformation they become $\{{d}_i(x_1,x_2,x_3)\}$. If we denote by $\,\varphi_0(x_1,x_2,x_3):=x_i\,{e}_i$ the position vector of points in the reference configuration $\Omega$, then the \emph{displacement vector}  field is $ {u}=\varphi-\varphi_0\,$.\\

To introduce nonlinear strain measures let us now consider a material point at position $x\in{\cal P}_k$ and let $x+h$ be a neighboring material point within ${\cal P}_k$. Then we get
\[ 
{\varphi}(x+h)-{\varphi}(x)=\mathrm{Grad}\,\varphi(x)\,h+{o}(h),
\] 
which provides us with a primary measure of deformation, the so called \emph{deformation gradient}
\begin{align}\label{e1}
	F\,=\,\nabla{\varphi}=\mathrm{grad}\,{\varphi}= {\varphi},_i \otimes \, {e}_i = \big(\,{\varphi},_1\,| \, \,{\varphi},_2\,|\,\,{\varphi},_3\,\big)_{3\times 3}\,.
\end{align}

With this the Cosserat-brothers discussed an energy functional of the form
\begin{align}\label{e2}
\begin{split}
   I(\varphi, \oR) &= \int_\Omega W(\,F,\,\oR,\,\mathrm{Grad}\,\oR\,)\, \mathrm{d}V -\Pi(\,\varphi,\,\oR\,),\text{ where }\\
    \Pi(\,\varphi,\, \oR\,) &= the\;\emph{external loading potential} 
\end{split}
\end{align}
and postulated invariance under Euclidean transformations (today also known as the \emph{principle of frame-invariance}), which states that the energy density has to be left invariant under rigid rotations
\[W(\,QF,\,Q\oR,\,Q\,\mathrm{Grad}\,\oR\,)=W(\,F,\,\oR,\,\mathrm{Grad}\,\oR\,)\qquad\mbox{for all }\;{Q\in\SO(3)}\,.\]
With $Q=\oR^T$ we get the reduced representation (cf. \cite{Cosserat09,AltEre_13})
\[\,W(\,F,\,\oR,\mathrm{Grad}\,\oR\,)=W(\,\oR^TF,\,\oR^T\oR,\,\oR^T\mathrm{Grad}\,\oR\,)=W(\,\oR^TF,\oR^T\mathrm{Grad}\,\oR\,)\,,\] 
which motivates to consider \textbf{the first Cosserat deformation tensor} (also called \emph{the nonsymmetric right stretch tensor}, see the original Cosserat book\cite{Cosserat09}, p. 123, eq. (43))
\begin{align}\label{e3}
\oU=\oR^TF=\, ({e}_i \otimes {d}_i)({\varphi},_j \otimes {e}_j)= \langle\,{d}_i\,,{\varphi},_j\rangle\, {e}_i \otimes {e}_j\,.
\end{align}
and the \textbf{second Cosserat deformation tensor} (also called \emph{the third order curvature tensor}, see the original book \cite{Cosserat09}, p. 123, eq. (44))
\begin{align}\label{e4}
	\begin{split}
 		   \fraK\,&=\,\oR^T\mathrm{Grad}\,\oR\,=\,\oR^T( \oR,_k \otimes \, {e}_k) \,=\,\big(\oR^T \oR,_k \big)\otimes \, {e}_k\\[4pt]
    			     \,&=\,\langle\,{d}_i\,,{d}_{j,k}\,\rangle \, {e}_i \otimes {e}_j \otimes\,  {e}_k\,=\,\big(\, \overline{{R}}^T\overline{{R}},_1\,| \,\, \overline{{R}}^T \overline{{R}},_2\,|\,\,\overline{{R}}^T \overline{{R}},_3\,\big)\,,
	\end{split}
\end{align}
as proper deformation measures. Although $\, {\mathfrak{K}} \,$ is a third order tensor, it has, in fact, only 9 independent components, since $\,\overline{{R}}^T\overline{{R}},_k\,$ is skew-symmetric ($k=1,2,3$).

\begin{remark}
	We write the above tensors $\oR$ and $\oU$ with superposed bars in order to distinguish them from the factors  $R$ and $U$ of the classical polar decomposition $F=R\,U$, in which $R$ is orthogonal and $U$ is positive definite, symmetric and which is a standard notation in elasticity.
\end{remark}

Then, according to the classical Cosserat theory, we define the relative Lagrangian \emph{strain measure for stretch} by
\begin{equation}\label{e5}
    \overline{{E}}\,=\,\overline{{U}}- \id_3\,=\,\big(\,\langle\,{d}_i\,,{\varphi},_j\rangle-\delta_{ij}\,\big)\, {e}_i \otimes {e}_j\,,
\end{equation}
where $\,\delta_{ij} $ is the Kronecker symbol and $\,\id_3:={e}_i \otimes {e}_i$ is the unit tensor in the 3-space and for the Lagrangian strain measure for orientation change (curvature) we use the second Cosserat deformation tensor.\\

If we take the transpose in the last two components of $\fraK$, then we obtain the third order curvature tensor ${\widetilde{\mathfrak{K}}}$ which was used in \cite{Neff_plate04_cmt,Neff_Edinb06,Neff_Forest07}:
\begin{align}\label{e6}
	\begin{split}
    {\widetilde{\mathfrak{K}}}\,&=\,\fraK^{\stackrel{2.3}{T}} \,=\,\langle\,{d}_i\,,{d}_{k,j}\,\rangle \, {e}_i \otimes {e}_j \otimes\,  {e}_k\,=\, ({e}_i \otimes {d}_i) \big({d}_{k,j} \otimes {e}_j \big)\otimes \, {e}_k \vspace{4pt}\\
    \,&=\,\overline{{R}}^T\big(\, \mathrm{Grad}\, {d }_k \big) \otimes \, {e}_k \,=\, \big(\, \overline{{R}}^T\mathrm{Grad}\, {d }_1\,| \,\, \overline{{R}}^T \mathrm{Grad}\, {d }_2\,|\,\,\overline{{R}}^T \mathrm{Grad}\, {d }_3\,\big)\,.
 \end{split}
 \end{align}
In order to avoid working with a third order tensor for the curvature, one can replace $\,{\mathfrak{K}}\, $ by a second order curvature tensor. This can be done in several ways. Indeed, a first way is to consider the axial vector of the skew-symmetric factor $\oR^T \oR,_k$ in the definition \eqref{e4} of $\fraK$ and to introduce thus the second order tensor
\begin{align}\label{e7}
     \Gamma\,=\,\mathrm{axl}\big(\oR^T \oR,_k \big)\otimes \, {e}_k\,\,.
 \end{align}
The tensor $\Gamma$ is a Lagrangian measure for curvature and it is frequently called the \emph{wryness tensor} in the literature (see e.g., \cite{Pietraszkiewicz09}). The relation between $\Gamma$  and $\fraK$ can be expressed with the help of the third order alternator tensor
\[
{\epsilon}=-\id_3\times \id_3=\epsilon_{ijk}\,{e}_i\otimes{e}_j\otimes {e}_k
\]
in the form (since $\,\mathrm{axl}\,{S}=-\frac{1}{2}\,\,{\epsilon}:{S}\,$, for any skew-symmetric second order tensor ${S}$)
\begin{equation}\label{e8}
    \Gamma\,=\,-\,\dfrac{1}{2}\,\,{\epsilon}:\fraK \quad\qquad\mathrm{and}\qquad\quad \fraK\,=\, \id_3\times \Gamma\,=\, -\, {\epsilon}\,\Gamma\,.
\end{equation}

As an alternative to the wryness tensor $\,{\Gamma}\,$ given by \eqref{e4}, one can also use the $\Curl$ operator instead of $\mathrm{Grad}$ in the definition \eqref{e4} of $\fraK$. In doing so, we define the \emph{dislocation density tensor} $\oK$ by
\begin{align}\label{e9}
    \oK\,=\, \oR^T\Curl\, \oR\,=\, -\,\oR^T\big(\oR,_i \times \, {e}_i\big) \,\,.
\end{align}
The dislocation density tensor $\,\overline{{K}}\,$ presents some advantages for micropolar and micromorphic media, as it can be seen in \cite{Neff_curl08,NeffQuart,NeffCMT,MadeoNeffGhibaW}.
Here, the $\,\Curl\,$ operator for vector fields $\,{v}=v_i\, {e}_i$ has the well-known expression
\[
\curl\,{v}= \epsilon_{ijk}\,v_{j,i}\,{e}_k= -{v},_i \times {e}_i\,\,,
\]
while the $\,\Curl\,$ operator for tensor fields $\,{P}=P_{ij}\,{e}_i\otimes {e}_j\,$ is defined as
\begin{align}\label{e10}
    \Curl\,{P} \,=\,\epsilon_{jrs}\,P_{is,r}\, {e}_i\otimes {e}_j\,=\,-\,{P},_i \times\, {e}_i\,\,,
\end{align}
In other words, $\,\Curl\,$ is defined row wise as in \cite{Mielke06,Svendsen02}: the rows of the $3\times 3$ matrix $\,\Curl\,{P}\,$ are the 3 vectors $\,\curl\,{P}_i\,$ ($i=1,2,3$), respectively. 
Note that sometimes other definitions of $\Curl$ are used, for example in \cite{Vallee92}. 
The Curl operator there, which we will denote by $\widetilde{\Curl}$, is defined by the requirement $(\widetilde{\Curl}\, P)u = \curl (Pu)$, entailing that $\widetilde{\Curl} (P^T) = (\Curl P)^T$. \\

We describe next the close relationship between the wryness tensor $\Gamma$, the dislocation density tensor $\oK$ and the second Cosserat deformation tensor. One can prove by a straightforward calculation that the following relations hold \cite{Neff_curl08}
\begin{align}\label{e11}
    -\,\oK\,=\,\Gamma^T-(\mathrm{tr}\,\Gamma)\, \id_3\qquad\quad \mathrm{and}\qquad\quad -\,\Gamma\,=\,\oK^T-\dfrac{1}{2}\, (\mathrm{tr}\,\oK)\, \id_3,
\end{align}
as with \eqref{e8}
\begin{align}\label{e12}
	\fraK\,=\,-{\epsilon}\,\Gamma\,=\,{\epsilon}\,\oK^T-\frac{1}{2}\big(\mathrm{tr}\,\oK\,\big)\,{\epsilon}.
\end{align}
For infinitesimal strains this formula is well-known under the name \emph{Nye's formula}, see \cite{Neff_curl08}, and there $(-\Gamma)$ is also called \emph{Nye's curvature tensor} (cf. \cite{Nye53}). From \eqref{e11} we deduce the following relations between the traces, symmetric parts and skew-symmetric parts of these two tensors
\begin{align}\label{e15}
    \mathrm{tr}\,\oK\,=\, 2\,\mathrm{tr}\,\Gamma,\qquad \mathrm{skew}\,\oK\,=\, \mathrm{skew}\,\Gamma,\qquad \mathrm{dev\,sym}\,\oK\,=\,-\, \mathrm{dev\,sym}\,\Gamma.
\end{align}

In view of \eqref{e11}--\eqref{e15}, we see that one can work either with the wryness tensor $\,{\Gamma}\,$ or, alternatively, with the dislocation density tensor $\,\overline{{K}}\,$, since they are in a simple one-to-one relation.

\section{Integrability conditions for the symmetric stretch tensor $U$}\label{sec3}
In classical continuum mechanics, the deformation of a continuous medium can be characterized by the Green-Lagrange strain tensor, which is defined by  $E\,=\,\frac{1}{2}\,(F^TF-\id_3)\,=\,\frac{1}{2}\,\left(\nabla\varphi^T\nabla\varphi-\id_3\right).$ Conversely, if a positive definite symmetric matrix $C=\id_3+2\,E$ is given in terms of $x$, it would be interesting to know under what conditions the system
\begin{align}
	\nabla\varphi^T\nabla\varphi=C
\end{align} 
of first order partial differential equations is liable to be solved. In \cite{Vallee92,Ciarlet06,CiarletVallee07} the authors have shown proper compatibility conditions in the case of a classical continuum undergoing large deformations, which yields the compatibility relation
\begin{align}\label{eq:comp}
	(\Curl A)^T+\Adj A=0,
\end{align}
where $\Adj A$ is the adjugate of $A$ and with 
\[
 A=\frac1{\det U}\left(\Big(U(\Curl U)^T-\frac{1}{2}\tr(U(\Curl U)^T)\id_3\Big)U\right). 
\]
In addition (cf. \cite{Vallee92}), C. Vall{\'e}e gets the following useful result for the symmetric stretch tensor $U=C^{1/2}$.
\begin{theorem}\label{th:comp1}
Let $\Omega\subset\R^3$ be a domain. Suppose $\varphi\colon\Omega\longrightarrow\R^3$ is an twice continously differentiable deformation field of an elastic body with polar decomposition
\begin{align}
\nabla\varphi(x)=R(x)\,U(x),
\end{align}
where $R\colon\Omega\longrightarrow$ and $U\colon\Omega\longrightarrow\mathrm{Psym}(3)$ is positive definite, symmetric and differentiable. Then for any $x\in\Omega$ and $u,v\in\R^3$ we have 
\begin{align}\label{eq:twentytwo}
	\left[\big(R^T\Grad R\,\big)(x).u\right].v=\frac{1}{\det U(x)}\Big(\Big\{U(x)(\Curl U(x))^T-\frac{1}{2}\tr(U(x)(\Curl U(x))^T)\id\Big\}\,U(x)u\Big)\times v
\end{align}
\end{theorem}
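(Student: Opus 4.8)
The plan is to treat \eqref{eq:twentytwo} as a pointwise algebraic identity whose only analytic ingredient is the compatibility of $F=\nabla\varphi$. Since $F$ is a gradient, each row $\varphi_{i,\cdot}$ is curl-free, so $\Curl F=0$; everything else is linear algebra at a fixed $x\in\Omega$. The task is then to extract the skew tensors $A_r:=R^TR,_r\in\mathfrak{so}(3)$ (whose collection is exactly $\fraK=R^T\Grad R$) from this single constraint, and to recognize the result as the stated expression built from $U$ and $\Curl U$.

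First I would expand $0=\Curl(RU)$ with the row-wise Leibniz rule. Using $(\Curl P)_{ij}=\epsilon_{jrs}P_{is,r}$ and $(RU)_{is}=R_{ia}U_{as}$ one obtains
\[
0=\epsilon_{jrs}R_{ia,r}U_{as}+R_{ia}(\Curl U)_{aj}.
\]
Left-multiplying by $R^T$ (contracting with $R_{il}$ and using $R_{il}R_{ia,r}=(A_r)_{la}$) removes the derivative of $R$ and produces the key linear relation
\[
(\Curl U)_{lj}=-\epsilon_{jrs}(A_rU)_{ls},
\]
which couples the unknown $\fraK$ to $U$ and $\Curl U$ with no remaining reference to $\varphi$ or to $R$ beyond the $A_r$.

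Next I would collapse the third-order object to second order. Writing each $A_r=\anti(a_r)$ with $a_r=\axl A_r$ and assembling the columns into $\Phi:=\Axl\fraK=(a_1\,|\,a_2\,|\,a_3)$, the relation becomes a single pointwise-linear equation $L[\Phi]=\Curl U$ whose coefficients depend only on $U$. Because $U\in\PSym(3)$ is invertible, $L$ is invertible and $\Phi$ is uniquely determined. To produce the explicit inverse I would substitute $(A_r)_{la}=-\epsilon_{laq}\Phi_{qr}$ and contract the resulting double Levi-Civita product against a second copy of $U$, invoking the cofactor identity $\Cof U=\det U\,U^{-1}$ (legitimate since $\det U>0$). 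This is what generates both the overall factor $1/\det U$ and the outer $U$-multiplications, while the $\epsilon$-$\epsilon$ contraction is precisely what produces the trace correction; the outcome is
\[
\Phi=\frac{1}{\det U}\Big(U(\Curl U)^T-\tfrac12\tr\big(U(\Curl U)^T\big)\id\Big)U.
\]

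Finally I would reassemble the statement. Since $\fraK.u=A_ru_r$ is skew, $\big[\fraK.u\big].v=(A_ru_r)v=\big(\axl(A_ru_r)\big)\times v=(\Phi u)\times v$, and inserting the formula for $\Phi$ gives exactly \eqref{eq:twentytwo}. The main obstacle is the middle step: carrying out the $\epsilon$-$\epsilon$ elimination while correctly recovering the determinant factor and the trace term. A clean way to sidestep the worst of this bookkeeping is \emph{not} to invert $L$ directly, but to verify that the candidate $\Phi$ above solves $(\Curl U)_{lj}=-\epsilon_{jrs}(\anti(\Phi e_r)\,U)_{ls}$ and then appeal to the uniqueness guaranteed by invertibility of $U$.
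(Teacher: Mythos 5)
Your proposal is correct and is essentially the paper's own argument in index-notation clothing: the paper proves the general non-symmetric version (Theorem~\ref{thm:main_one}, of which Theorem~\ref{th:comp1} is the special case $U\in\PSym(3)$) by exactly the same three ingredients — compatibility of $\nabla\varphi$ (your $\Curl(RU)=0$ is the Schwarz symmetry encoded in Lemma~\ref{lem:DUhk}), skew-symmetry of $R^TR,_r$ (Lemma~\ref{lem:RTDRh}), and the adjugate/double-Levi-Civita contraction plus the trace correction coming from inverting $X\mapsto X-(\tr X)\id$ (Lemma~\ref{lem:commutevectorproducts} followed by taking the trace of \eqref{eq:AUCurl}) — and, like yours, the argument never actually uses symmetry of $U$, only invertibility. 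The one soft spot is your proposed shortcut ``verify the candidate and appeal to uniqueness'': the injectivity of the linear map $L$ does not follow merely from ``$U$ is invertible'' without argument, and the cleanest way to establish it is precisely to exhibit the explicit inverse, i.e.\ to carry out the $\epsilon$--$\epsilon$ elimination you were hoping to sidestep — so you should treat the direct contraction as the actual proof and the uniqueness remark only as a consistency check.
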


Alternatively to \eqref{eq:twentytwo}, we have for $x\in\Omega$ 
\begin{align}\label{eq:twentyfive}
	\Axl \left(R^T\Grad R\right)(x)=\frac{1}{\det U(x)}\Big\{U(x)(\Curl U(x))^T-\frac{1}{2}\tr(U(x)(\Curl U(x))^T)\id\Big\}\,U(x).
\end{align}

\begin{remark}\label{rem3}
Let $\Omega\subset\R^3$ be a domain. Theorem \ref{th:comp1} is also well-known for the case of infinitesimal deformations $\varphi\colon\Omega\longrightarrow\R^3$ (see \cite{Ghiba_15}) in which the polar decomposition 
$\nabla\varphi=F=RU$ turns into
\begin{align}
\id+\nabla u=(\id+\mathrm{skew}\nabla u+\ldots)(\id+\sym\nabla u+\ldots)
\end{align}
in the neighbourhood of $\id$, where $\mathrm{skew}\nabla u$ is the infinitesimal continuum relation and $\sym\nabla u$ is the infinitesimal stretch. Thus
\begin{align}
\nabla u=\sym\nabla u+\mathrm{skew}\nabla u
\end{align}
is the linearized polar decomposition and it holds 
\begin{align}
	\nabla\axl(\mathrm{skew}\nabla u)&=\big[\Curl\sym\nabla u\big]^T-\frac{1}{2}\tr\Big(\big(\Curl\sym\nabla u\big)^T\Big)\id=\big[\Curl\sym\nabla u\big]^T,
\end{align}
which is the linearization analogue of \eqref{eq:twentyfive}.
\end{remark}

\section{Integrability conditions for the non-symmetric first Cosserat deformation tensor $\oU$}\label{sec4}
In \cite{Yavari13} the authors obtained a compatibility condition for the right Cauchy-Green tensor, using homology and homotopy group techniques and in \cite{Pietraszkiewicz83,Shield73} one can find compatibility conditions for rotations. In a more direct approach we get the following result similar to Theorem \ref{th:comp1}. 
\begin{theorem}\label{thm:main_one}
Let $\Omega\subset\R^3$ be a domain and $\varphi\colon \Omega\longrightarrow \R^3$ be a twice continuously differentiable function. Suppose, for any $x\in\Omega$, 
\begin{align} \label{eq:decomposition}
\nabla\varphi(x)=\oR(x)\,\oU(x), 
\end{align}
where $\oR\colon \Omega\longrightarrow \mathrm{O}(3)$ and $\oU\colon \Omega\longrightarrow \mathrm{GL}(3)$ are continuously differentiable. Then, for any $x\in \Omega$
\begin{equation}\label{eq:AxlRTGradR}
\mathrm{Axl}\,\big(\oR^T\Grad \oR\,\big)(x)=\frac{1}{\det \oU(x)}\Big\{\oU(x)(\Curl \oU(x))^T-\frac{1}{2}\tr(\oU(x)(\Curl \oU(x))^T)\id\Big\}\,\oU(x).
\end{equation}
\end{theorem}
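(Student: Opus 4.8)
The plan is to exploit the single integrability hypothesis actually available, namely that $F=\nabla\varphi$ with $\varphi\in C^2$, so that the rows of $F$ are gradients and $\Curl F=0$; equivalently the third-order array $\varphi_{i,jk}$ is symmetric in the pair $(j,k)$. Everything else is algebra built on the decomposition $F=\oR\,\oU$ together with the orthogonality $\oR^T\oR=\id$, which guarantees that $\fraK.e_k=\oR^T\oR_{,k}\in\mathfrak{so}(3)$, so that $\Axl\fraK$ is well defined and, by \eqref{eq:Axl}, coincides with the wryness tensor $\Gamma$ of \eqref{e7}.

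First I would differentiate the decomposition $F=\oR\,\oU$ and left-multiply by $\oR^T$; using $\oR^T\oR=\id$ this yields, for each $k$,
\begin{align*}
(\fraK.e_k)\,\oU=\oR^TF_{,k}-\oU_{,k}.
\end{align*}
The crucial observation is that the matrix $\oR^TF_{,k}$ has $(a,b)$-entry $\langle d_a,\varphi_{,bk}\rangle$, which by the integrability of $\varphi$ is \emph{symmetric in the pair} $(b,k)$. Hence antisymmetrising the displayed identity in $(b,k)$ annihilates this term and leaves a relation involving only $\fraK$, $\oU$ and the first derivatives of $\oU$. Contracting that antisymmetric relation with the alternator $\epsilon_{bkl}$ turns the antisymmetrised $\oU_{,k}$ into $\Curl\oU$ and produces the fundamental relation
\begin{align*}
(\Curl\oU)_{al}=\epsilon_{bkl}\,\fraK_{amk}\,\oU_{mb}.
\end{align*}
Since $\fraK$ is skew in its first two slots it carries exactly nine independent components, so these nine equations determine $\fraK$ completely from $\oU$ and $\Curl\oU$; the discarded symmetric-in-$(b,k)$ part carries no further information.

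It then remains to invert this relation for $\Axl\fraK=\Gamma$. Writing $\fraK_{amk}=-\epsilon_{amn}\Gamma_{nk}$ via \eqref{eq:defAnti} and substituting, I would form the combination $\oU(\Curl\oU)^T\oU$ and reduce the resulting product of two alternators and three copies of $\oU$ using the determinant identity $\epsilon_{pmn}\oU_{pq}\oU_{mr}=\det\oU\,\epsilon_{qrs}(\oU^{-1})_{sn}$ followed by the $\epsilon$–$\delta$ identity. The two terms of the $\epsilon$–$\delta$ contraction yield respectively $\det\oU\,\Gamma$ and a multiple of $\oU$, giving
\begin{align*}
\oU(\Curl\oU)^T\oU=\det\oU\,\Big(\Gamma-\tr(\Gamma\,\oU^{-1})\,\oU\Big).
\end{align*}
A parallel, shorter computation of the trace gives $\tfrac12\tr\!\big(\oU(\Curl\oU)^T\big)=-\det\oU\,\tr(\Gamma\,\oU^{-1})$, which identifies the spurious $\oU$-term above with $\tfrac12\tr(\oU(\Curl\oU)^T)\,\oU$. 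Dividing by $\det\oU$ and solving for $\Gamma=\Axl\fraK$ then reproduces exactly \eqref{eq:AxlRTGradR}.

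The routine parts are the two differentiations and the index bookkeeping; the real obstacle is the algebraic inversion in the last paragraph. The difficulty is not any single identity but arranging the contraction so that the factor $\det\oU$ and the trace-correction term emerge simultaneously: the naive contraction returns $\det\oU\,\Gamma$ plus an unwanted term proportional to $\oU$, and one must recognise — through the independent trace computation — that this unwanted term is precisely the $-\tfrac12\tr(\,\cdot\,)\,\id$ correction in the statement. Since neither positive-definiteness nor symmetry of $\oU$ is ever used, this argument simultaneously reproves Vallée's Theorem \ref{th:comp1} and extends it to the case $\oU\in\mathrm{GL}(3)$, $\oR\in\mathrm{O}(3)$.
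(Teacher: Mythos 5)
Your proposal is correct and follows essentially the same route as the paper's proof: differentiate $F=\oR\,\oU$, use Schwarz's theorem to annihilate the part of $\oR^T\varphi_{,jk}$ symmetric in the derivative indices, identify the antisymmetrized derivative of $\oU$ with $\Curl\oU$ (the paper's Lemma \ref{lem:DUhk}), reduce the contraction of the alternator with two copies of $\oU$ via an adjugate/determinant identity (the paper's Lemma \ref{lem:commutevectorproducts}, proved there with the $\vol$ $3$-form rather than your index identity), and finally take the trace of the resulting linear relation to pin down the $-\tfrac12\tr(\cdot)\,\id$ correction before solving for $\Gamma=\Axl\fraK$. The differences are notational (Levi-Civita index gymnastics versus the coordinate-free $\vol$/$\Adj$ calculus) and a slight reordering of the inversion step, neither of which changes the substance.
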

\begin{remark}
Similar to Remark \ref{rem3} we obtain the formula 
\[
 \nabla \axl(\oA) = \Big(\Curl(\nabla u-\oA)\Big)^T-\frac12\tr\left(\Big(\Curl(\nabla u-\oA)\Big)^T\right)\cdot \id
\]
as linearization of \eqref{eq:AxlRTGradR}, where $\nabla u-\oA$ is the linearization of the first Cosserat deformation tensor and $\nabla \axl(\oA)$ is the linearization of the second Cosserat curvature tensor.
\end{remark}

As partial converse, we obtain formula \eqref{eq:comp} as condition for the existence of a deformation $\phii$ admitting the decomposition \eqref{eq:decomposition} also in the case of not necessarily symmetric tensors $\oU$:

\begin{theorem}\label{thm:main_two}
 Let $\Om\sub\R^3$ be a simply connected domain. Let $\oU\colon\Om\to\Rdd$ be continuously differentiable with values in $GL(3)$. Then there exists a deformation $\phii\col \Om\to \R^3$ and $\oR\col \Om\to O(3)$ satisfying 
  \[
   \nabla \phii(x)=\oR(x)\oU(x) \qquad \mbox{for all }x\in\Om,
  \]
provided that 
\begin{equation}\label{eq:thm_defA}
 A:= \frac1{\det \oU}\left(\oU(\Curl \oU)^T-\frac12\tr(\oU(\Curl \oU)^T)\id\right)\,\oU
\end{equation}
is continuously differentiable and satisfies the compatibility condition 
\[
 (\Curl A(x))^T + \Adj A(x) =0 \qquad \mbox{for every }x\in\Om.
\]
Furthermore this deformation is unique up to a rigid body deformation, that is multiplication with a constant rotation and addition of a constant vector.
\end{theorem}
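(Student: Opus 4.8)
The plan is to read the formula of Theorem~\ref{thm:main_one} backwards: the prescribed tensor $A$ in \eqref{eq:thm_defA} should play the role of $\Axl(\oR^T\Grad\oR)$, so that $\Anti A$ becomes the connection form from which $\oR$ is reconstructed, and the compatibility condition $(\Curl A)^T+\Adj A=0$ should turn out to be exactly the integrability (flatness) condition for that reconstruction. Accordingly I would proceed in three stages: first integrate $\Anti A$ to a rotation field $\oR$, then show $\oR\,\oU$ is curl-free and integrate it to a deformation $\phii$, and finally read off uniqueness.

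\emph{First stage (construction of $\oR$).} I set $\Gamma_k:=(\Anti A).e_k\in\mathfrak{so}(3)$ and consider the overdetermined linear first order system $\oR,_k=\oR\,\Gamma_k$ for an unknown matrix field $\oR$. On the simply connected domain $\Om$ this system admits a global solution with prescribed value $\oR(x_0)\in O(3)$ if and only if the Frobenius/zero-curvature condition
\[
 \Gamma_k,_l-\Gamma_l,_k=[\Gamma_k,\Gamma_l]
\]
holds; this is the standard integration theorem for flat connections. Since each $\Gamma_k$ is skew, the matrix $M:=\oR^T\oR$ solves $M,_k=[M,\Gamma_k]$ with $M(x_0)=\id$, whence $M\equiv\id$ by uniqueness, so the solution stays in $O(3)$ (indeed in $\SO(3)$ if $\oR(x_0)\in\SO(3)$) and $\oR^T\Grad\oR=\Anti A$.

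\emph{The crux} is to identify this flatness condition with the hypothesis $(\Curl A)^T+\Adj A=0$. Using $\axl\Gamma_k=Ae_k$ (the $k$-th column of $A$, read off from \eqref{eq:defAnti}) together with $\axl[\Gamma_k,\Gamma_l]=\axl\Gamma_k\times\axl\Gamma_l$, the condition above is equivalent to
\[
 A,_l\,e_k-A,_k\,e_l=(Ae_k)\times(Ae_l)\qquad\text{for all }k,l.
\]
Contracting this identity (antisymmetric in $k,l$) with $\epsilon_{klm}$ and using $\epsilon_{klm}(Ae_k)\times(Ae_l)=2\,(\Cof A)e_m$ on the right together with the row-wise definition of $\Curl$ on the left, one obtains $\Curl A+\Cof A=0$, equivalently (transposing, since $\Adj A=(\Cof A)^T$) the hypothesis $(\Curl A)^T+\Adj A=0$. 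Because in $\R^3$ the passage to the $\epsilon$-dual is a bijection on tensors antisymmetric in $(k,l)$, this equivalence is genuine, and I expect this index computation --- matching the cross product of the columns of $A$ with its cofactor, and the antisymmetrized derivatives with $\Curl A$ --- to be the main (though purely algebraic) obstacle.

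\emph{Second and third stages.} With $\oR$ in hand I form $P:=\oR\,\oU$ and claim $\Curl P=0$. This is exactly the computation behind Theorem~\ref{thm:main_one} run in reverse: expanding $\Curl(\oR\oU)$ in terms of $\oR^T\Grad\oR$ and $\Curl\oU$ shows that $\Curl(\oR\oU)=0$ is equivalent to $\Axl(\oR^T\Grad\oR)=A$ with $A$ as in \eqref{eq:thm_defA}, and we built $\oR$ so that $\Axl(\oR^T\Grad\oR)=\Axl(\Anti A)=A$. On the simply connected $\Om$ the Poincaré lemma then yields $\phii\col\Om\to\R^3$ with $\nabla\phii=\oR\oU$. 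For uniqueness, if $\nabla\widetilde\phii=\widetilde{\oR}\,\oU$ is another such decomposition then $\widetilde{\oR}^T\Grad\widetilde{\oR}=\Anti A=\oR^T\Grad\oR$, so $\oR$ and $\widetilde{\oR}$ solve the same system $X,_k=X\,\Gamma_k$; computing $(\widetilde{\oR}\,\oR^T),_k=0$ gives $\widetilde{\oR}=Q\,\oR$ for a constant $Q\in O(3)$, hence $\nabla\widetilde\phii=Q\,\nabla\phii$ and $\widetilde\phii=Q\,\phii+c$, which is the asserted rigid body deformation.
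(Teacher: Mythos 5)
Your proposal is correct and follows essentially the same route as the paper: reconstruct $\oR$ by integrating the flat connection $\Anti A$ (the paper's Lemma \ref{lem:solverot}, via Frobenius' theorem), check that the formula \eqref{eq:thm_defA} forces $\Curl(\oR\,\oU)=0$ by running the computation of Theorem \ref{thm:main_one} backwards (the paper's Lemma \ref{lem:DRU} together with the explicit verification of \eqref{eq:AUCurl}), integrate once more, and read off uniqueness from uniqueness of the ODE solutions. The only cosmetic difference is that you identify the zero-curvature condition with $(\Curl A)^T+\Adj A=0$ by applying $\axl$ to the bracket identity and contracting with $\epsilon$ (columns-cross-product equals cofactor), whereas the paper reaches the same condition through the double-cross-product identity of Lemma \ref{lem:baccab}.
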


The representation given by Theorem \ref{thm:main_one} makes it possible to express the Cosserat tensors in terms of $\oU$ and $\Curl\oU$ in the following way:

\begin{consequence}
Let $\Omega\subset\R^3$ be a domain and $\varphi\colon \Omega\longrightarrow \R^3$ be a twice continuously differentiable function. Suppose, we have the decomposition
\begin{align} \label{eq:31}
\nabla\varphi(x)=\oR(x)\,\oU(x) \qquad \mbox{for any } x\in \Om, 
\end{align}
where $\oR\colon \Omega\longrightarrow O(3)$ and $\oU\colon \Omega\longrightarrow GL(3)$ are differentiable. Then, we have
\begin{align}\label{eq_C_1}
 \mathfrak{K}&=\oR^T\Grad \oR=\Anti\left(\frac{\Big[\oU(x)(\Curl \oU(x))^T-\frac{1}{2}\tr(\oU(x)(\Curl \oU(x))^T)\id_3\Big]\oU(x)}{\det \oU}\right),\\
 \overline{K}&=-\mathfrak{K}:\epsilon=-\Anti \left(\frac{\Big[\oU(x)(\Curl \oU(x))^T-\frac{1}{2}\tr(\oU(x)(\Curl \oU(x))^T)\id_3\Big]\oU(x)}{\det \oU}\right):\epsilon.
\end{align}
\end{consequence}
\begin{proof}
Equation \eqref{eq:31} directly follows from the definition \eqref{eq:defAnti} of $\Anti$ and \eqref{eq:AxlRTGradR}.
Formula \eqref{eq_C_1} hence is a consequence of the identity $\oK=-\fraK:\epsilon$, which has been proven in \cite{NBO_2014}.
\end{proof}

Another consequence of Theorem \ref{thm:main_one} is that the $\Curl$ of $\oU$ can be written as function of $\oU$ and $\oK$ as follows. Note, however, that it is to be expected that the conditions concerning regularity and structure (orthogonality for $\oR$ and gradient-structure of $\oR\,\oU$ simultaneously) make extending this consequence to less regular functions more difficult than the usual trivial approximation by sequences of mollified functions. 

\begin{consequence}
Let $\Omega\subset\R^3$ be a domain and $\varphi\colon \Omega\longrightarrow \R^3$ be a twice continuously differentiable function. Suppose we have the decomposition
\begin{align}
\nabla\varphi(x)=\oR(x)\,\oU(x) \qquad\mbox{for any }x\in\Omega\,, 
\end{align}
where $\oR\colon \Omega\longrightarrow O(3)$ and $\oU\colon \Omega\longrightarrow GL(3)$ are continuosly differentiable. Then,  we have
\begin{align}
(\Curl \oU)^T&=\Adj\oU\left(\Gamma\oU\,^{-1}-\tr\Big(\Gamma\oU\,^{-1}\Big)\id_3\right),\quad\text{that is}\label{eq:val_lank}\\
(\Curl\oU)^T&=\Adj\oU\left[\frac{1}{2}\tr(\oK)\Big(\oU\,^{-1}-\tr\big(\oU\,^{-1}\big)\id_3\Big)-\left(\oK^T\oU\,^{-1}-\tr\left(\oK^T\oU\,^{-1}\right)\id_3\right)\right]. \label{eq:val_lank_2}
\end{align}
\end{consequence}

\begin{proof}
Again we define:
\[
A:=\frac{1}{\det\oU}\Big[\oU(\Curl \oU)^T-\frac{1}{2}\tr(\oU(\Curl \oU)^T)\id_3\Big]\oU,
\]
as in \eqref{eq:thm_defA}, so that \eqref{eq:Axl} readily shows 
\(
 \epsilon:\fraK = -2 \Axl \fraK = -2A
\)
according to \eqref{eq:AxlRTGradR}. Thus we have (using \eqref{e8})
\(
	\Gamma=-\frac{1}{2}\epsilon:\fraK=A
\)
and \eqref{eq:val_lank} follows from Lemma \ref{lem:DRU} below or, more precisely, from the first step of the proof of Theorem \ref{thm:main_one}, that is \eqref{eq:beginningthmproof}. 
Therefore also \eqref{eq:val_lank_2}, being the combination of \eqref{eq:val_lank} and \eqref{e11}, is proven.
\end{proof}

\subsection{Proof of the Theorem}
For the sake of notational convenience let us introduce the following notation in order to interpret the determinant of a matrix as function of its column vectors:
\begin{definition}\label{def:vol}
In the following, let $\vol\colon \R^3\times \R^3\times \R^3\longrightarrow \R$ denote the 3-linear alternating 3-form on $\R^3$ defined by 
\[
 \vol(u,v,w)=\det A, \qquad u,v,w\in\R^3
\]
where $A=(u|v|w)$.
\end{definition}
 
A direct consequence of the definition then is, that for $A\in\R^{3\times 3}$ and $u,v,w\in\R^3$ we have (cf. Appendix, Lemma \ref{lem:detadj})
\begin{align}
	\vol(Au,Av,Aw)=\det A\vol(u,v,w)\quad\text{ and }\quad\vol(Au,Av,w)=\vol(u,v,(\Adj A) w),\label{detadj}
\end{align}
where $\Adj A$ denotes the adjugate of $A$. 

If the matrix is applied to only one argument of $\vol$, we obtain the following formula (cf. Appendix, Lemma \ref{lem:tr}):
\begin{align}
\vol(Au,v,w)+\vol(u,Av,w)+\vol(u,v,Aw)=(\tr A) \vol(u,v,w) \label{tr}
\end{align}
for all $u,v,w\in\R^3$ and $A\in\R^{3\times 3}$.
\begin{lemma}
 \label{lem:commutevectorproducts}
 Let $A,B\in\Rdd$, $\det B\neq 0$, let $h,k\in\R^3$. Then 
\[
 (Ah)\times (Bk)- (Ak)\times(Bh) = \left[\tr(AB\inv) (\Adj B)^T - ((\Adj B)AB\inv)^T \right] h\times k.
\]
\end{lemma}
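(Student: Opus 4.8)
The plan is to verify the claimed vector identity by testing both sides against an arbitrary $w\in\R^3$ and using the correspondence $\sprod{a\times b,c}=\vol(a,b,c)$ between the cross product and the form $\vol$ from Definition \ref{def:vol}. Thus it suffices to show that $\sprod{(Ah)\times(Bk)-(Ak)\times(Bh),w}$ equals $\sprod{[\tr(AB\inv)(\Adj B)^T-((\Adj B)AB\inv)^T]\,h\times k,w}$ for every $w$, since a vector in $\R^3$ is determined by its inner products against all $w$.

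For the left-hand side, the key manoeuvre is to write $A=(AB\inv)B$ and abbreviate $C:=AB\inv$, $p:=Bh$, $q:=Bk$, so that $Ah=Cp$ and $Ak=Cq$. Then
\[
 \sprod{(Ah)\times(Bk)-(Ak)\times(Bh),w}=\vol(Cp,q,w)-\vol(Cq,p,w)=\vol(Cp,q,w)+\vol(p,Cq,w),
\]
where the last equality uses the antisymmetry of $\vol$ in its first two arguments. Applying the trace identity \eqref{tr} with matrix $C$ to the triple $(p,q,w)$ turns the right-hand side into $(\tr C)\vol(p,q,w)-\vol(p,q,Cw)$.

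It remains to pull the factor $B$ back out. Since $p=Bh$ and $q=Bk$, the adjugate identity \eqref{detadj} gives $\vol(Bh,Bk,z)=\vol(h,k,(\Adj B)z)$ for any $z$, so the expression becomes $(\tr C)\vol(h,k,(\Adj B)w)-\vol(h,k,(\Adj B)Cw)$. Rewriting each $\vol(h,k,\cdot)$ as $\sprod{h\times k,\cdot}$ and transposing the matrices across the inner product via $\sprod{Mz,w}=\sprod{z,M^Tw}$ produces $\sprod{[(\tr C)(\Adj B)^T-((\Adj B)C)^T](h\times k),w}$. Since $C=AB\inv$ and $\tr C=\tr(AB\inv)$, this is exactly the required right-hand side, and as $w$ was arbitrary the lemma follows.

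I expect the only delicate points to be the correct sign in the antisymmetry step and the consistent use of $A=(AB\inv)B$; both are routine, and no genuine obstacle arises, because the three algebraic identities for $\vol$ do all the work once the substitution $C=AB\inv$, $p=Bh$, $q=Bk$ is made.
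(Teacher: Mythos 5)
Your proof is correct and follows essentially the same route as the paper's: test both sides against an arbitrary vector, convert cross products to $\vol$ via $\sprod{a\times b,v}=\vol(a,b,v)$, and combine the adjugate identity \eqref{detadj} with the trace identity \eqref{tr}. The only (cosmetic) difference is that you factor $A=(AB\inv)B$ and apply the trace identity before pulling $B$ out through the adjugate, whereas the paper factors $A=B(B\inv A)$ and applies the adjugate identity first; your ordering lands directly on $\tr(AB\inv)$ and $((\Adj B)AB\inv)^T$, sparing the paper's closing observation that $\tr(B\inv A)=\tr(AB\inv)$ and that $B\inv A\,\Adj B=(\Adj B)AB\inv$ (both being $\det B\cdot B\inv A B\inv$).
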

\begin{proof}
Let us consider the scalar product of the left-hand-side with arbitrary $v\in\R^3$. Since $\sprod{a\times b, v}=\vol(a,b,v)$ for all $a,b,v\in\R^3$ (cf. Appendix, Lemma \ref{lem:vectorproduct}) we have
\begin{align*}
 L:=\sprod{Ah\times Bk-Ak\times Bh, v}=&\vol(Ah,Bk,v)-\vol(Ak,Bh,v)\\
=&\vol(BB\inv Ah,Bk,v)-\vol(BB\inv Ak,Bh,v),
\end{align*}
which, by $\eqref{detadj}_2$ equals 
\[
 \vol(B\inv Ah,k,\Adj B v)-\vol(B\inv Ak,h,\Adj B v)
\]
and after switching two arguments of the second $\vol$ and adding and substracting \\$\vol(h,k,B\inv A\Adj Bv),$ we infer
\begin{align*}
 L=&\vol(B\inv Ah,k,\Adj Bv)+\vol(h,B\inv A k, \Adj B v)\\
      &\hspace{0.5cm}+\vol(h,k,B\inv A\Adj Bv)-\vol(h,k,B\inv A\Adj Bv).
\end{align*}
Here \eqref{tr} can be applied to combine the first three terms and since $\sprod{a\times b, v}=\vol(a,b,v)$
(see \eqref{vectorproduct}), we get 
\begin{align*}
 L=&\tr(B\inv A)\vol(h,k,\Adj B v)-\vol(h,k,B\inv A\Adj Bv)\\
 =&\tr(B\inv A)\sprod{h\times k,\Adj Bv}-\sprod{h\times k, B\inv A\Adj Bv}\\
 =&\tr(B\inv A)\sprod{\Adj B^T h\times k,v}-\sprod{(B\inv A\Adj B)^T h\times k,v}\\
 =&\sprod{\left[\tr(B\inv A)\Adj B^T-(B\inv A\Adj B)^T\right]h\times k,v}.
\end{align*}
Due to the arbitrariness of $v$, this proves the claim if we take into account that $\tr (B^{-1} A)=\tr (AB^{-1})$.
\end{proof}

\begin{lemma}\label{lem:baccab}
Let $A\in\Rdd$. Then, for any $h,k,v\in \R^3$,
\[
 ((\Adj A)^T(h\times k))\times v = Ah\times (Ak\times v) - (Ak\times Ah)\times v.
\]
\end{lemma}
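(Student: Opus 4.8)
The plan is to reduce the whole identity to a single application of the vector triple product rule; the only structural input needed is the compatibility of the adjugate with the vector product, which is already contained in $\eqref{detadj}$.

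First I would show that $(\Adj A)^T(h\times k) = Ah\times Ak$. Testing against an arbitrary $w\in\R^3$ and using $\sprod{a\times b,w}=\vol(a,b,w)$ together with $\eqref{detadj}_2$,
\begin{align*}
 \sprod{Ah\times Ak,\,w} &= \vol(Ah,Ak,w) = \vol(h,k,(\Adj A)w) \\
 &= \sprod{h\times k,\,(\Adj A)w} = \sprod{(\Adj A)^T(h\times k),\,w}.
\end{align*}
Since $w$ is arbitrary the two vectors agree, so the left-hand side of the claim is exactly $(Ah\times Ak)\times v$.

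It then remains to expand this triple product. Writing $p:=Ah$ and $q:=Ak$ and applying $(a\times b)\times c = b\sprod{a,c}-a\sprod{b,c}$ together with $a\times(b\times c)=b\sprod{a,c}-c\sprod{a,b}$, I would compute
\begin{align*}
 (p\times q)\times v &= q\sprod{p,v}-p\sprod{q,v}, \\
 p\times(q\times v) &= q\sprod{p,v}-v\sprod{p,q}, \\
 q\times(p\times v) &= p\sprod{q,v}-v\sprod{q,p}.
\end{align*}
Subtracting the last two lines and using $\sprod{p,q}=\sprod{q,p}$ cancels the $v$-terms and reproduces the first line, so that $(Ah\times Ak)\times v = Ah\times(Ak\times v)-Ak\times(Ah\times v)$. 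This is precisely the asserted right-hand side, the second triple product being grouped as $Ak\times(Ah\times v)$.

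The argument is routine once the first reduction is made, so the only step deserving attention is the passage $(\Adj A)^T(h\times k)=Ah\times Ak$ through $\eqref{detadj}_2$; beyond that I expect no real obstacle apart from keeping track of the signs and of the parenthesization in the triple products.
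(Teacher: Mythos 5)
Your proof is correct and follows essentially the same route as the paper's: both arguments rest on the identity $Ah\times Ak=(\Adj A)^T(h\times k)$ obtained from $\eqref{detadj}_2$ combined with the vector triple product expansion, the only cosmetic differences being that you work outward from the left-hand side while the paper starts from the right-hand side, and that you write out the $\vol$-based verification of $(\Adj A)^T(h\times k)=Ah\times Ak$ which the paper merely cites. Your reading of the second term on the right as $Ak\times(Ah\times v)$ rather than the literally printed $(Ak\times Ah)\times v$ is exactly the version the paper's own proof establishes and later uses in Lemma \ref{lem:solverot}, so that regrouping is the intended one.
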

\begin{proof}
Let us recall that, for any $a,b,c\in\R^3$, 
\(
 a\times(b\times c) = \sprod{a,c}b-\sprod{a,b}c.
\)
Using this rule twice, for arbitrary $h,k,v\in \R^3$ we obtain 
\begin{align*}
 Ah\times(Ak\times v)&-Ak\times(Ah\times v)= \sprod{Ah,v}Ak - \sprod{Ah,Ak}v-\sprod{Ak,v}Ah+\sprod{Ak,Ah}v 
 \\=&\sprod{Ah,v}Ak-\sprod{Ak,v}Ah=v\times(Ak\times Ah)=(Ah\times Ak)\times v=((\Adj A)^T(h\times k))\times v,
\end{align*}
where the last equality results from the fact that for any $h,k\in\R^3$ and $A\in\Rdd$, 
 \(
  Ah\times Ak = (\Adj A)^T (h\times k)
 \)
 due to the second part of \eqref{detadj}.
\end{proof}

\begin{lemma} \label{lem:DUhk}
Let $\oU\colon\Omega\longrightarrow \R^{3\times 3}$ be differentiable. Then for any $x\in \Omega$, $h,k\in\R^3$, we have 
\[
 (\Grad U(x).h)k-(\Grad U(x).k)h=\Curl U(x).(h\times k)\,.
\]
\end{lemma}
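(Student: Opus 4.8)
The plan is to prove Lemma \ref{lem:DUhk} by a direct componentwise computation, verifying the claimed identity by evaluating both sides against the standard basis and comparing coefficients. The key point is that the left-hand side involves only partial derivatives of $\oU$ contracted against $h$ and $k$ in a specific antisymmetric combination, while the right-hand side is built from $\Curl \oU$ acting on $h\times k$, so the whole statement is really a pointwise algebraic identity among the derivatives $\oU_{is,r}$ together with the Levi-Civita symbol.

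First I would unwind the notation. Recall that $\Grad \oU$ is the third order tensor with components $\oU_{ij,k}$, so that $(\Grad \oU.h)_{ij}=\oU_{ij,k}h_k$ and hence $((\Grad \oU.h)k)_i=\oU_{ij,k}h_k\,k_j$; I will write the two free vectors as $h$ and $k$ but keep their indices distinct to avoid collision, say $h=(h_p)$ and $k=(k_q)$. Then the $i$-th component of the left-hand side is
\begin{align*}
 \big[(\Grad \oU.h)k-(\Grad\oU.k)h\big]_i=\oU_{iq,p}\,h_p k_q-\oU_{iq,p}\,k_p h_q.
\end{align*}
On the right-hand side, using the row-wise definition \eqref{e10} of $\Curl$, namely $(\Curl \oU)_{ij}=\epsilon_{jrs}\oU_{is,r}$, and the cross product formula $(h\times k)_j=\epsilon_{jpq}h_pk_q$, the $i$-th component of $\Curl\oU.(h\times k)$ reads
\begin{align*}
 \big[\Curl\oU.(h\times k)\big]_i=\epsilon_{jrs}\,\oU_{is,r}\,\epsilon_{jpq}\,h_p k_q.
\end{align*}

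The main step is then to contract the two Levi-Civita symbols sharing the index $j$ by means of the identity $\epsilon_{jrs}\epsilon_{jpq}=\delta_{rp}\delta_{sq}-\delta_{rq}\delta_{sp}$. Substituting this gives
\begin{align*}
 \big[\Curl\oU.(h\times k)\big]_i=(\delta_{rp}\delta_{sq}-\delta_{rq}\delta_{sp})\,\oU_{is,r}\,h_p k_q=\oU_{iq,p}\,h_p k_q-\oU_{ip,q}\,h_p k_q,
\end{align*}
and after relabelling the summation indices $p\leftrightarrow q$ in the second term this is exactly $\oU_{iq,p}h_pk_q-\oU_{iq,p}k_ph_q$, matching the left-hand side componentwise. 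Since $i$ was arbitrary, the two vectors agree, which proves the claim.

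I do not expect a genuine obstacle here; the only thing to be careful about is bookkeeping of indices, in particular keeping the row index $i$ passive throughout and applying the $\epsilon$-$\epsilon$ contraction to the correct pair of indices dictated by the row-wise convention \eqref{e10} rather than a column-wise one. A coordinate-free alternative would be to invoke the vector identity $(\curl v).(h\times k)=(\Grad v.h)k-(\Grad v.k)h$ for each row $v=\oU_i$ of $\oU$ and then reassemble the rows, but the direct index computation above is cleaner and self-contained given the conventions already fixed in the paper.
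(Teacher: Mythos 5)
Your computation is correct and follows essentially the same route as the paper's proof: both arguments write each side in components using the row-wise definition \eqref{e10} of $\Curl$ and reduce the right-hand side via the contraction $\epsilon_{jrs}\epsilon_{jpq}=\delta_{rp}\delta_{sq}-\delta_{rq}\delta_{sp}$, after which the two expressions coincide upon relabelling summation indices. No gap.
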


\begin{proof}
According to the Definition we have
\begin{align*}
	\mathrm{Curl}\,U(x).(h\times k)&=\big(-U,_i(x)\times e_i\big)(h_r\,e_r\times k_s\,e_s)
							 =\big(\varepsilon_{ijk}U_{mk,j}(x)\,e_m\otimes e_i\big)(\varepsilon_{rst}h_sk_t\,e_r)\\
						&=\varepsilon_{ijk}\varepsilon_{rst}U_{mk,j}(x)h_sk_t\delta_{ir}\,e_m
						   =\varepsilon_{ijk}\varepsilon_{ist}U_{mk,j}(x)h_sk_t\,e_m\\
						&=-\varepsilon_{jik}\varepsilon_{ist}U_{mk,j}(x)h_sk_t\,e_m
						   =\big(\delta_{js}\delta_{kt}-\delta_{jt}\delta_{ks}\big)U_{mk,j}(x)h_sk_t\,e_m\\
						&=\big(U_{mt,s}(x)-U_{ms,t}(x)\big)h_sk_t\,e_m
\end{align*}
and on the other hand
\begin{align*}
	\big(\Grad U(x).h\big)k-\big(\Grad U(x).k\big)h&=\big((U,_k(x)\otimes\,e_k)h_l\,e_l\big)k_r\,e_r-\big((U,_k(x)\otimes \,e_k)k_r\,e_r\big)h_l\,e_l\\
	&=\big(U_{ij,k}(x)h_l\delta_{lk}\,e_i\otimes e_j\big)k_r\,e_r-\big(U_{ij,k}(x)k_r\delta_{kr}\,e_i\otimes e_j\big)h_l\,e_l\\
	&=\big(U_{ij,l}(x)h_lk_r\delta_{jr}-U_{ij,r}(x)k_rh_l\delta_{jl}\big)\,e_i\\
	&=\big(U_{ir,l}(x)-U_{il,r}(x)\big)h_lk_r\,e_i
\end{align*}
and therefore the lemma is proved.
\end{proof}

\begin{lemma}
\label{lem:RTDRh}
Let $\oR\colon \Omega\longrightarrow \Rdd$ be a differentiable function with values in $O(3)$. Then for any $x\in \Omega$, $h\in\R^3$, 
\begin{align}\label{eq:skew}
	\fraK.h=\oR^T(x) \Grad \oR(x).h \in \R^{3\times 3} \quad \text{ is skew-symmetric}
\end{align}
and there is a map $A\colon \Omega\times \R^3\longrightarrow\R^3$, linear with respect to its second argument, such that for any $x\in\Omega$, $h\in\R^3$, $v\in\R^3$, 
\begin{align}\label{eq:DRhv}
	(\Grad \oR(x).h)v=\oR(x)A(x,h)\times v\,.
\end{align}
\end{lemma}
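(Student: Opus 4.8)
The plan is to derive both assertions of the lemma from the single orthogonality relation $\oR^T\oR=\id$ (equivalently $\oR\oR^T=\id$, valid because $\oR$ takes values in $\mathrm{O}(3)$) simply by differentiating this constraint and then identifying skew-symmetric matrices with their axial vectors.

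First, to establish the skew-symmetry \eqref{eq:skew}, I would fix $x\in\Om$ and $h\in\R^3$ and differentiate the constant map $y\mapsto\oR^T(y)\oR(y)=\id$ in direction $h$. The product rule yields
\[
(\Grad\oR(x).h)^T\,\oR(x)+\oR^T(x)\,(\Grad\oR(x).h)=0.
\]
Since $(\fraK.h)^T=\big(\oR^T(x)(\Grad\oR(x).h)\big)^T=(\Grad\oR(x).h)^T\oR(x)$, this identity is exactly $(\fraK.h)^T+\fraK.h=0$, i.e. $\fraK.h=\oR^T(x)\,\Grad\oR(x).h\in\R^{3\times3}$ is skew-symmetric. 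This is the only place orthogonality of $\oR$ enters the first claim.

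Second, for \eqref{eq:DRhv} I would define the candidate map as the axial vector of the skew matrix just obtained,
\[
A(x,h):=\axl\!\big(\oR^T(x)\,\Grad\oR(x).h\big)=\axl(\fraK.h).
\]
Linearity of $A$ in its second argument is immediate, since $h\mapsto\Grad\oR(x).h$ is linear, left-multiplication by $\oR^T(x)$ is linear, and $\axl$ is linear. To reach the stated formula I would use $\oR\oR^T=\id$ to rewrite $\Grad\oR(x).h=\oR(x)\big(\oR^T(x)\,\Grad\oR(x).h\big)=\oR(x)\,(\fraK.h)$, then apply this to an arbitrary $v\in\R^3$. Because $\fraK.h$ is skew-symmetric, the defining property of $\axl$, namely $S\,v=(\axl S)\times v$ for $S\in\mathfrak{so}(3)$ (see \eqref{eq:not_anti}), converts $(\fraK.h)\,v$ into $A(x,h)\times v$, giving $(\Grad\oR(x).h)v=\oR(x)\,\big(A(x,h)\times v\big)$, which is precisely \eqref{eq:DRhv}.

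I do not expect a genuine obstacle here: the entire content is the differentiation of the orthogonality constraint together with the standard skew-matrix/axial-vector correspondence. The only points demanding care are bookkeeping of where $\oR$ acts — in particular that $\oR(x)$ multiplies the whole cross product $A(x,h)\times v$ and not the vector $A(x,h)$ in isolation — and observing that only $\oR\in\mathrm{O}(3)$, rather than $\SO(3)$, is needed, so that no factor of $\det\oR$ intrudes anywhere in the argument.
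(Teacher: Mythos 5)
Your proposal is correct and follows essentially the same route as the paper: differentiate $\oR^T\oR=\id$ to obtain skew-symmetry of $\oR^T\Grad\oR.h$, then set $A(x,h)=\axl(\fraK.h)$ and use the skew-matrix/axial-vector correspondence before multiplying by $\oR(x)$. No gaps.
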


\begin{proof}
 Because $\oR(x)\in O(3)$ for any $x\in \Omega$, 
\[
 \id_3=\oR^T(x)\oR(x)
\]
holds true for any $x\in \Omega$ and differentiating leads to 
\[
 0=\big[\Grad \oR(x).h\big]^T \oR(x)+\oR(x)^T \Grad \oR(x).h = 2\sym(\oR(x)^T \Grad \oR(x).h)
\]
for arbitrary $x\in \Omega$, $h\in\R^3$, which shows that the symmetric part of $\oR(x)^T\,\Grad\,\oR(x).h$ vanishes and implies \eqref{eq:skew}.
By definition $\eqref{eq:not_anti}_3$, $\oR^T(x) \Grad \oR(x).h$ therefore can be expressed as 
\[
 (\oR^T(x)\Grad \oR(x).h)v=A(x,h)\times v,
\]
where $A(x,h)$ depends linearly on $h$. Now, equation \eqref{eq:DRhv} follows directly upon multiplication by $\oR(x)$.
\end{proof}

In order to assert solvability of differential equations, let us recall the Theorem of Frobenius (see \cite[Thm. 4.E, p. 167]{ZeidlerI}):
\begin{lemma}\label{lem:Frob}
 Let $X,Y$ be Banach spaces, $U\subset X\times Y$ open and $F\col U\to L(X,Y)$ a continuously differentiable function. Then the following are equivalent:\\
 i)For any $(x_0,y_0)\in U$, the differential equation $Dy(x)=F(x,y(x))$, $y(x_0)=y_0$ has exactly one $C^1$-solution in a neighbourhood of $(x_0,y_0)$.\\
 ii) $D_x F(x,y) hk +D_yF(x,y)F(x,y)hk = D_xF(x,y)kh + D_yF(x,y)F(x,y)kh$ for all $(x,y)\in U$ and all $h,k\in X$.
\end{lemma}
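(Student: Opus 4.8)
This final statement is the classical \emph{Theorem of Frobenius} for total differential equations in Banach spaces, and my plan is to establish the two implications separately, with essentially all of the work in $(ii)\Rightarrow(i)$. Throughout, for $\eta,\xi\in X$ and $\zeta\in Y$ I abbreviate $(D_xF\,\eta)\xi$ by $D_xF\,\eta\,\xi$ and $(D_yF\,\zeta)\xi$ by $D_yF\,\zeta\,\xi$, both lying in $Y$. For $(i)\Rightarrow(ii)$ I would argue by a regularity bootstrap. Fix an arbitrary $(x_0,y_0)\in U$ and let $y$ be the local solution guaranteed by $(i)$. Since $Dy(x)=F(x,y(x))$ and $F$ is $C^1$, the right-hand side is a $C^1$ function of $x$ as a composition of $C^1$ maps, so $Dy$ is $C^1$ and $y\in C^2$. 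Differentiating $Dy(x)=F(x,y(x))$ once more in a direction $h$ and using the chain rule gives $D^2y(x)\,h=D_xF(x,y(x))\,h+D_yF(x,y(x))\,[F(x,y(x))h]$ in $L(X,Y)$; evaluating at a further direction $k$ yields $D^2y(x)\,h\,k=D_xF\,h\,k+D_yF\,(Fh)\,k$. Symmetry of the second Fréchet derivative, $D^2y(x)\,h\,k=D^2y(x)\,k\,h$, is then precisely $(ii)$ at $(x_0,y_0)$; since $(x_0,y_0)$ was arbitrary, $(ii)$ holds throughout $U$.

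The substance is $(ii)\Rightarrow(i)$, which I would prove by \emph{radial integration}. Fix $(x_0,y_0)$ and shrink to a ball small enough that the segment $x_0+ta$ (with $a:=x-x_0$) and the forthcoming solution stay in $U$ for $t\in[0,1]$. For each such $x$ consider the one-parameter ODE $\del_t z(t,x)=F(x_0+ta,z(t,x))\,a$ with $z(0,x)=y_0$, and set $y(x):=z(1,x)$. Existence, uniqueness, and continuously differentiable dependence of $z$ on the parameter $x$ follow from the standard theory of ODEs in Banach spaces with $C^1$ right-hand side; in particular $D_xz(t,x)\in L(X,Y)$ exists and is continuous.

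To verify that $y$ solves the total differential equation I would introduce, for a fixed direction $h\in X$, the defect $w(t):=D_xz(t,x)\,h-t\,F(x_0+ta,z(t,x))\,h\in Y$. Since $z(0,\cdot)\equiv y_0$ is constant in $x$, we have $w(0)=0$. Differentiating the defining ODE in $x$ (direction $h$), differentiating $t\,F(\cdot,z)\,h$ in $t$, and then substituting $D_xz\,h=w+t\,Fh$, one finds after collecting terms that
\[
\dot w(t)=t\,\Big[\big(D_xF\,h\,a+D_yF\,(Fh)\,a\big)-\big(D_xF\,a\,h+D_yF\,(Fa)\,h\big)\Big]+D_yF\,w(t)\,a,
\]
where $F$ and its derivatives are evaluated at $(x_0+ta,z(t,x))$. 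The bracketed term is exactly the difference of the two sides of the integrability condition $(ii)$ with $k=a$, hence it vanishes, leaving the homogeneous linear ODE $\dot w(t)=D_yF\,w(t)\,a$ with $w(0)=0$. Uniqueness for linear ODEs (or a Gronwall estimate) forces $w\equiv0$, so $D_xz(t,x)=t\,F(x_0+ta,z(t,x))$; at $t=1$ this reads $Dy(x)=F(x,y(x))$, while $y(x_0)=y_0$ because $z(\cdot,x_0)\equiv y_0$. Uniqueness of the full solution then reduces to uniqueness along each ray from $x_0$, which is again standard ODE uniqueness.

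I expect the main obstacle to be precisely the computation of $\dot w$ in the last step: the careful bookkeeping of which slot of the multilinear derivatives $D_xF$ and $D_yF$ carries the direction $h$ and which carries $k=a$, since only once this cancellation is arranged can the integrability hypothesis $(ii)$ be consumed and $w$ seen to solve a homogeneous linear equation with vanishing initial data. Securing the $C^1$-dependence of $z$ on the parameter $x$, which is needed even to form $D_xz$, is a second, more routine but still non-trivial, ingredient.
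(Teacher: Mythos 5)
The paper does not actually prove this lemma: it is quoted verbatim from Zeidler (Thm.\ 4.E, p.\ 167) as a known result, so there is no in-paper argument to compare against. Your proof is a correct, self-contained rendition of the standard proof of Frobenius' theorem for total differential equations. The direction $(i)\Rightarrow(ii)$ via bootstrapping $y$ to $C^2$ and invoking symmetry of the second Fr\'echet derivative is right, including the slot bookkeeping that makes $D^2y(x)hk=D_xF\,hk+D_yF\,(Fh)k$ match the grouping in $(ii)$. For $(ii)\Rightarrow(i)$, the radial-integration scheme and in particular the computation of $\dot w$ for the defect $w(t)=D_xz(t,x)h-t\,F(x_0+ta,z(t,x))h$, followed by the substitution $D_xz\,h=w+tFh$ and the cancellation of the bracket by $(ii)$ with $k=a$, is exactly the classical argument; Gronwall then gives $w\equiv 0$, and uniqueness along rays gives uniqueness of $y$. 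Two points you flag yourself should each get a sentence in a polished version, though neither is a gap: (a) that $z(\cdot,x)$ exists on all of $[0,1]$ and keeps $(x_0+ta,z(t,x))$ in $U$ for $x$ in a sufficiently small ball follows from lower semicontinuity of the maximal existence time in the parameter, starting from the constant solution $z(\cdot,x_0)\equiv y_0$; and (b) the interchange $\partial_t(D_xz\,h)=D_x(\partial_tz)\,h$ is exactly what the theorem on $C^1$-dependence on parameters provides, namely that $D_xz\,h$ solves the variational equation with zero initial data.
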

We immediately take this lemma to its use in the following:

\begin{lemma}\label{lem:solverot}
 Let $\Om\sub\R^3$ be simply connected. 
 Given a differentiable function $A\col \Om\to \Rdd$ 
 such that 
 \[
  (\Adj A(x))^T + \Curl A(x) =0 \qquad \mbox{for all }x\in \Om,
 \]
 there exists a continuously differentiable function $R\col \Om\to\Rdd$ with values in $O(3)$ such that 
 \begin{equation}\label{eq:diffeq}
  [\Grad R(x).h]v=R(x) ((A(x)h)\times v) \qquad \mbox{for all }x\in\Om,\,h,v\in\R^3.
 \end{equation}
 This function is unique up to the multiplication by a constant $Q\in O(3)$.
\end{lemma}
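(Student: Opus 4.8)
The plan is to bring the first-order system \eqref{eq:diffeq} into the form covered by the Theorem of Frobenius (Lemma \ref{lem:Frob}) and to identify the hypothesis $(\Adj A)^T+\Curl A=0$ with its integrability condition. First I would rewrite \eqref{eq:diffeq}: since $(A(x)h)\times v=\anti(A(x)h)\,v$ for all $v\in\R^3$, the claim is equivalent to the matrix differential equation $\Grad R(x).h=R(x)\,\anti(A(x)h)$ for every $h\in\R^3$. Setting $X=\R^3$, $Y=\Rdd$ and $F(x,R)h:=R\,\anti(A(x)h)$, which is linear in $R$ and continuously differentiable (here I use that $A$ is continuously differentiable, as the application of Frobenius requires), this is exactly an equation $Dy=F(x,y)$ of the type treated in Lemma \ref{lem:Frob}.

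Next I would evaluate the integrability condition ii) of Lemma \ref{lem:Frob}. Using linearity of $F$ in its second argument, $D_RF(x,R)[S]h=S\,\anti(A(x)h)$ and $D_xF(x,R)[k]h=R\,\anti((\Grad A(x).k)h)$; substituting $S=F(x,R)k=R\,\anti(A(x)k)$ turns condition ii), after cancelling the common left factor $R$, into
\[
\anti(Ak)\,\anti(Ah)-\anti(Ah)\,\anti(Ak)=\anti\big((\Grad A.h)k\big)-\anti\big((\Grad A.k)h\big).
\]
The key point is the commutator identity $\anti(a)\,\anti(b)-\anti(b)\,\anti(a)=\anti(a\times b)$, a direct consequence of the double cross product rule recalled in the proof of Lemma \ref{lem:baccab}, which rewrites the left-hand side as $\anti(Ak\times Ah)$. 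On the right-hand side, Lemma \ref{lem:DUhk} applied to $A$ gives $(\Grad A.h)k-(\Grad A.k)h=\Curl A.(h\times k)$. Since $\anti$ is injective, the integrability condition is thus equivalent to $Ak\times Ah=\Curl A.(h\times k)$ for all $h,k\in\R^3$. Using $Ah\times Ak=(\Adj A)^T(h\times k)$ from \eqref{detadj}, the left-hand side equals $-(\Adj A)^T(h\times k)$, so, because $h\times k$ ranges over all of $\R^3$, the condition reads $\Curl A+(\Adj A)^T=0$, which is precisely the hypothesis. Hence Lemma \ref{lem:Frob} provides a unique local $C^1$-solution through any prescribed initial value.

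It remains to globalize, to enforce $O(3)$-valuedness, and to prove uniqueness. For the global solution I would fix $x_0\in\Om$ and $R_0\in O(3)$ and continue the local Frobenius solution along paths starting at $x_0$; the integrability just verified guarantees that continuation along homotopic paths produces the same endpoint value, and simple connectedness of $\Om$ makes all paths from $x_0$ to a given $x$ homotopic, so that $R(x)$ is globally well defined and $C^1$. To see that $R$ stays in $O(3)$, I would restrict to a path $\gamma$, set $s(t)=R(\gamma(t))^TR(\gamma(t))$, and differentiate using $\Grad R.h=R\,\anti(Ah)$ and the skewness $\anti(Ah)^T=-\anti(Ah)$; this yields the linear ODE $\dot s=s\,\anti(A\dot\gamma)-\anti(A\dot\gamma)\,s$ with $s(0)=\id$, which the constant $s\equiv\id$ also solves, so uniqueness for linear ODEs forces $R^TR\equiv\id$. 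For uniqueness up to a constant, if $R_1,R_2$ are two solutions I would compute $\Grad(R_1R_2^T).h=R_1\,\anti(Ah)\,R_2^T+R_1\,\anti(Ah)^T\,R_2^T=0$, so $Q:=R_1R_2^T$ is a constant element of $O(3)$ and $R_1=QR_2$.

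The algebraic identities feeding the integrability computation are routine once the commutator rule and Lemma \ref{lem:DUhk} are in place; the main obstacle is the passage from the purely local statement of Lemma \ref{lem:Frob} to a single global solution on $\Om$. This is exactly where simple connectedness enters, and one must make the monodromy/path-continuation argument precise (via a homotopy-invariance statement for the continued endpoint value) so that the local integral leaves patch together consistently. The $O(3)$-constraint then comes along for free through the ODE-uniqueness argument above.
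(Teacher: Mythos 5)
Your proposal is correct and follows essentially the same route as the paper: apply Frobenius' theorem to $\Grad R.h=R\,\anti(A(x)h)$, reduce its integrability condition via the double-cross-product/adjugate identities and the relation $(\Grad A.h)k-(\Grad A.k)h=\Curl A(h\times k)$ to exactly $(\Adj A)^T+\Curl A=0$, then use simple connectedness for globality, an ODE-uniqueness argument for $R^TR\equiv\id$, and uniqueness of solutions for the statement about the constant factor $Q$. Your treatment of the $O(3)$-constraint (deriving the linear equation $D(R^TR).h=[R^TR,\anti(Ah)]$ and invoking uniqueness) is in fact slightly more careful than the paper's, which asserts directly that $R^TR$ is constant.
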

\begin{proof}
 We want to apply Frobenius' theorem (Lemma \ref{lem:Frob}) with $F\col \Om\times \Rdd\to L(\R^3,\Rdd)$ defined by 
 \[
  [F(x,Y)h]v = Y[A(x)h]v, \qquad x\in \Om, Y\in \Rdd, h,v\in \R^3.
 \]
 Thus, the partial derivatives of $F$ are given by 
 \begin{align*}
  [D_xF(x,Y)hk ]v =& Y[D(A(x)k)h]\times v,\\
   [[D_YF(x,R)Z]\xi]v=&Z(A(x)\xi)v,\\
   [D_YF(x,Y)F(x,Y)hk]v=&F(x,Y)h(A(x)k)\times v=Y(A(x)h)\times((A(x)k)\times v).
 \end{align*}
 Condition 2 from Frobenius' theorem (Lemma \ref{lem:Frob}) therefore is
 \begin{equation}\label{eq:frobcond}
 Ah\times (Ak\times v) - Ak\times(Ah\times v) + (\Curl A(h\times k))\times v=0\qquad \forall h,k,v, 
\end{equation}
which we can obtain as follows:
\[
 (\Adj A)^T + \Curl A =0
\]
implies that 
\[
 (((\Adj A)^T +\Curl A)(h\times k))\times v = 0 \qquad \mbox{for all }v\in \R^3, h,k\in \R^3.
\]
By Lemma \ref{lem:baccab}, this entails \\
\[
 Ah\times (Ak\times v) - Ak\times(Ah\times v) + (\Curl A(h\times k))\times v=0\qquad \forall h,k,v,
\]
which is \eqref{eq:frobcond}.
Therefore, since $\Om$ is simply connected, choosing an arbitrary point $x_0\in\Om$ and $y_0\in O(3)$ yields a unique solution $R$ to \eqref{eq:diffeq} satisfying $R(x_0)=y_0$. 
Since $R^T DR(x)h$ is skew-symmetric for any $x\in \Om$ and any $h\in\R^3$ by \eqref{eq:diffeq} (it can be expressed as vector product), $R^TR$ is constant within $\Om$ and hence $R^T(x)R(x)=y_0^Ty_0=\id$, so that actually $R(x)\in O(3)$ for all $x\in \Om$.
Given a second solution $\Rtilde\col \Om\to O(3)$, by the aforementioned uniqueness property, $R(x_0)\Rtilde(x_0)^{-1}\Rtilde=R$.
\end{proof}

\textbf{Proof of the theorems.} 
The theorems can be seen as partially converse statements. Let us begin their proofs with the part they have in common:
\begin{lemma}\label{lem:DRU}
 Let $\oU\col \Om\to \Rdd$ be differentiable and $\oU(x)$ invertible for every $x\in\Om$ and let $\oR\col \Om\to \Rdd$ be a differentiable function with values in $O(3)$. Then 
 \begin{equation}\label{eq:GradRUsym}
  \Grad(\oR\,\oU)(x)(k,h)=\Grad(\oR\,\oU)(x)(h,k)
 \end{equation}
 if and only if 
\begin{equation}\label{eq:AUCurl}
 \det \oU(x)[A(x)\oU\inv(x)-\tr(A(x)\oU\inv(x))\id_3]=\oU(x)(\Curl \oU(x))^T \quad \text{for any }x\in\Om,
\end{equation}
  where $A$ is defined as in Lemma \eqref{eq:DRhv}.
\end{lemma}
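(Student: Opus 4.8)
The plan is to read the displayed symmetry condition as the statement that the object $\Grad(\oR\,\oU)(x)$ is symmetric in its two vector slots, i.e.\ $(\Grad(\oR\,\oU)(x).k)h = (\Grad(\oR\,\oU)(x).h)k$ for all $h,k\in\R^3$, which is precisely the pointwise integrability (curl-freeness) of the candidate deformation gradient $\oR\,\oU$. First I would expand the left-hand side by the product rule, $\Grad(\oR\,\oU).k = (\Grad\oR.k)\oU + \oR(\Grad\oU.k)$, apply it to $h$, and rewrite the rotation part by \eqref{eq:DRhv} from Lemma \ref{lem:RTDRh}, obtaining $(\Grad(\oR\,\oU)(x).k)h = \oR\big[(A(x)h\times \oU k)\text{-type terms}\big]$; concretely $(\Grad(\oR\,\oU)(x).k)h = \oR\big[(A(x)k)\times\oU h + (\Grad\oU.k)h\big]$, where I abbreviate $A(x)k:=A(x,k)$. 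Since $\oR(x)$ is invertible, the symmetry condition is equivalent to cancelling $\oR$ and collecting the pure-$\oU$ terms, which by Lemma \ref{lem:DUhk} collapses to the vector identity
\[
 (A(x)k)\times \oU h - (A(x)h)\times\oU k = \Curl\oU(x).(h\times k)\qquad\text{for all }h,k\in\R^3.
\]

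Next I would bring the left-hand side into matrix form using Lemma \ref{lem:commutevectorproducts} with $\mathcal A = A(x)$ and $\mathcal B = \oU(x)$ (legitimate since $\det\oU\neq 0$); up to the sign stemming from the interchanged roles of $h$ and $k$, this turns the left-hand side into $-\big[\tr(A\oU\inv)(\Adj\oU)^T - ((\Adj\oU)A\oU\inv)^T\big](h\times k)$. Because the image of $(h,k)\mapsto h\times k$ is all of $\R^3$, the vector identity is equivalent to the matrix identity obtained by dropping $(h\times k)$, namely $-[\tr(A\oU\inv)(\Adj\oU)^T-((\Adj\oU)A\oU\inv)^T] = \Curl\oU$. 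Transposing and inserting $\Adj\oU = \det\oU\,\oU\inv$ turns this into $\det\oU\,[\tr(A\oU\inv)\oU\inv - \oU\inv A\oU\inv] = -(\Curl\oU)^T$, and left-multiplying by $\oU$ and multiplying by $-1$ yields exactly \eqref{eq:AUCurl}.

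Every step is an equivalence: the product rule is an identity, cancellation of the invertible $\oR$ is reversible, the passage between the vector identity and the matrix identity uses only that $h\times k$ ranges over all of $\R^3$, and the remaining manipulations are purely algebraic, so the chain establishes the asserted \emph{if and only if}. I expect the only real care to be bookkeeping: matching the argument order in Lemma \ref{lem:commutevectorproducts} (and hence tracking the overall sign), and consistently using $\Adj\oU=\det\oU\,\oU\inv$ together with the correct side for the left-multiplication by $\oU$. There is no genuine analytic obstacle, since mere differentiability of $\oU$ and $\oR$ with $\oU(x)$ invertible is all that the cited lemmas require.
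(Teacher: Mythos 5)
Your proposal is correct and follows essentially the same route as the paper: product rule, Lemma \ref{lem:RTDRh} to express the rotation part via $A(x,\cdot)$, Lemma \ref{lem:DUhk} to produce $\Curl\oU(x).(h\times k)$, Lemma \ref{lem:commutevectorproducts} to rewrite the cross-product difference in matrix form, and finally transposition together with $\oU\,\Adj\oU=\det\oU\,\id_3$. The sign bookkeeping and the observation that every step is reversible (so the equivalence, not just one implication, is obtained) match the paper's argument.
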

\begin{proof}
The product rule implies that
 \[
  \Grad(\oR\,\oU)(x)(h,k)=[\Grad \oR(x).k]\oU(x)h+\oR(x)[\Grad \oU(x).k]h \quad \text{for any }x\in \Om, h,k\in\R^3
 \]
 and calculating the difference between two such expressions with different order of the arguments, we obtain 
\begin{align*}
\Grad(\oR\,\oU)(x)(h,k)&-\Grad(\oR\,\oU)(x)(k,h)\\
=& (\Grad \oR(x).k)\oU(x)h+\oR(x)[\Grad \oU(x).k]h-[\Grad \oR(x).h]\oU(x)k-\oR(x)[\Grad \oU(x).h]k.
\end{align*}
An application of Lemma \ref{lem:RTDRh} shows that \eqref{eq:GradRUsym} is equivalent to 
\[
 0=\oR(x)A(x,k)\times(\oU(x)h)+\oR(x)[\Grad \oU(x).k]h-\oR(x)A(x,h)\times(\oU(x)k)-\oR(x)[\Grad \oU(x).h]k
\]
for any $x\in \Om$, $h,k\in\R^3$, where $A(x,h)$ is defined as in \eqref{eq:DRhv}.
Multiplying this equation by $\oR^T(x)$ and re-ordering the terms, we equivalently arrive at 
\[
 A(x)h\times \oU(x)k- A(x)k\times \oU(x)h = -\big([\Grad \oU(x)h]k+[\Grad \oU(x)k]h\big)
\]
for any $x\in\Om$, $h,k\in\R^3$, where we can apply the equalities from Lemmata \ref{lem:commutevectorproducts} and \ref{lem:DUhk} to the left-hand-side and right-hand-side respectively.
This transforms the equation into
\[
 [\tr\left(A(x)\oU\inv(x)\right)[\Adj \oU(x)]^T-((\Adj \oU(x))A(x)\oU\inv(x))^T] h\times k = - \Curl \oU(x) (h\times k)
\]
for all $x\in\Om$, $h,k\in\R^3$ and therefore, by transposition and because $h,k$ were arbitrary, 
\[
 \Adj \oU(x)[A(x)\oU\inv(x)-\tr(A(x)\oU\inv(x))\id_3] = (\Curl \oU(x))^T
\]
for any $x\in \Om$. Multiplication by the invertible matrix $\oU(x)$ from the left gives
\begin{equation*}
 \det \oU(x)[A(x)\oU\inv(x)-\tr(A(x)\oU\inv(x))\id_3]=\oU(x)(\Curl \oU(x))^T \quad \text{for any }x\in\Om\,,
\end{equation*}
which finally asserts that \eqref{eq:GradRUsym} and \eqref{eq:AUCurl} are equivalent.
\end{proof}

We continue with the proof of Theorem \ref{thm:main_one}.

\begin{proof}[Proof of Theorem \ref{thm:main_one}]
 As $\nabla\phii=\oR\,\oU$, by Schwarz' theorem and Lemma \ref{lem:DRU}, we obtain $0=D^2\phii(x)hk-D^2\phii(x)kh$ and hence 
 \begin{equation}\label{eq:beginningthmproof}
  \det \oU(x)[A(x)\oU\inv(x)-\tr(A(x)\oU\inv(x))\id_3]=\oU(x)(\Curl \oU(x))^T \quad \text{for any }x\in\Om.
 \end{equation}
Taking the trace on both sides, we infer 
\[
 \det \oU(x)[\tr(A(x)\oU\inv(x))-\tr(A(x)\oU\inv(x))\cdot 3] = \tr(\oU(x)(\Curl \oU(x))^T)
\]
for any $x\in \Om$ and hence 
\[
 \tr \left(A(x)\oU\inv(x)\right)=-\frac1{2\det \oU(x)} \tr(\oU(x)(\Curl \oU(x))^T) \qquad \text{for all  }x\in\Om.
\]
Inserting this into \eqref{eq:AUCurl} shows 
\[
 \det \oU(x)[A(x)\oU\inv(x)+\frac{1}{2\det \oU(x)}\tr(\oU(x)(\Curl \oU(x))^T)\id_3]=\oU(x)(\Curl \oU(x))^T\quad \text{for }x\in \Om,
\]
which entails 
\[
 A(x)=\frac1{\det \oU(x)}\oU(x)(\Curl \oU(x))^T \oU(x)-\frac1{2\det \oU(x)}\tr[\oU(x)(\Curl \oU(x))^T] \oU(x) \quad \text{for all } x\in\Om.
\]
By definition \eqref{eq:DRhv} of $A$ finally this translates into 
\[
 [\Grad \oR(x).h]v= \oR(x)[(\frac1{\det \oU(x)}\oU(x)(\Curl \oU(x))^T \oU(x)-\frac1{2\det \oU(x)}\tr[\oU(x)(\Curl \oU(x))^T] \oU(x)) h]\times v
\]
for any $x\in \Om$, $h,v\in\R^3$.
\end{proof}

\begin{proof}[Proof of Theorem \ref{thm:main_two}]
Since $A$ satisfies the compatibility condition, Lemma \ref{lem:solverot} enables us to find $\oR\col \Om\to O(3)$ such that 
\[
 [\Grad \oR(x).h]v=\oR(x) ((A(x)h)\times v) \qquad \mbox{for all }x\in\Om,\,h,v\in\R^3.
\]
Furthermore, $\oR$ is determined uniquely up to multiplication with a constant rotation.\\
The definition of $A$ ensures that 
\[
 A\,\oU\inv = \frac{1}{\det \oU}(\oU(\Curl \oU)^T - \frac12 \tr (\oU(\Curl \oU)^T)\id)
\]
and hence 
\begin{align*}
 \det \oU(A \oU\inv-\tr(A\oU\inv)\id)=&\oU(\Curl \oU)^T-\frac12\tr(\oU(\Curl \oU)^T)\id-\tr\left[\oU(\Curl \oU)^T - \frac12\tr(\oU(\Curl \oU)^T)\id\right]\id\\
 =&\oU(\Curl \oU)^T-\frac12\tr(\oU(\Curl \oU)^T)\id-\tr(\oU(\Curl \oU)^T)\id+\frac32\tr(\oU(\Curl \oU)^T)\id\\
 =&\oU(\Curl \oU)^T
\end{align*}
that is, condition \eqref{eq:AUCurl} from Lemma \ref{lem:DRU} is met, so that condition 2 of Lemma \ref{lem:Frob} is satisfied for $F(x,Y)=\oR(x)\oU(x)$ and hence upon choosing $x_0\in \Om$ and $\phii(x_0)\in\R^3$ there is a unique mapping $\phii\col\Om\to\R^3$ such that $\nabla \phii=\oR\,\oU$.
\end{proof}

\section{Existence theorem for a new Cosserat formulation}\label{sec5}
We have seen that in fact $\oK$ can entirely be obtained as a complicated nonlinear function of the strain measure $\oU$ and its derivative in the form $\Curl \oU$ alone; therefore, any formulation directly written as a simple function of $(\oU, \oK)$ is a function of $(\oU, \Curl \oU)$, in which, however, the effects of strain $\oU$, and generalized curvature $\Curl \oU$ are not separated. Next, we would like to present an energetic formulation which has the possibility to additively separate the pure straining part $\sym(\oU-\id)$  and $\Curl \oU$ for certain parameter ranges. 
Such a formulation has not been considered previously. We set 
\begin{align}
 I(\varphi,\oR)=& \int_\Omega W(\oU,\oK)\, \mathrm{d}V -\Pi(\varphi, \oU),\qquad \mathrm{where} \label{eq:minprob}\\
\label{eq:energy}
 W(\varphi,\oR)=&W(\oU,\oK)=\mu\norm{\sym(\oU-\id)}^2+\mu_c\norm{\mathrm{skew}(\oU-\id)}^2\nonumber \\
& \qquad \qquad \quad +\frac{\lambda}{2}(\det\oU^2-\frac{1}{\det\oU^2}-2)+\mu L_c^2\norm{\Curl(\oU-\id)}^2+\mu \widehat{L}_c\norm{\oK}^4,\\
\Pi(\varphi,\oU)=& \int_\Omega f\, \varphi\, dV\notag 
\end{align}
for a given body force $0\not\equiv f\in L^2(\Omega)$. 
Taking into account \eqref{eq:val_lank} and \eqref{eq:val_lank_2} it is clear that, for sufficiently smooth functions,
\(
\norm{\Curl(\oU-\id)}^2=\norm{\Curl \oU}^2
\)  
is indeed a complicated, nonlinear function of $\oU$ and $\oK$. \\

In \eqref{eq:energy}, $\mu_c=0$ means that only the symmetric part of the straining is taken into account and $\widehat{L}_c$ is an additional length scale which makes \eqref{eq:minprob} well-posed. 
It would be the final goal to show that \eqref{eq:minprob} 
with $W(\varphi,\oR)$ given by \eqref{eq:energy} admits minimizers for the limit parameter range $\mu_c=0$ and $\widehat{L}_c=0$ in a suitable Soboloev-space dictated by the coercivity of the problem. 
Unfortunately, currently we cannot dispose of the contribution of $\norm{\oK}$ by setting $\widehat{L}_c=0$ and thereby completing the separation of the effects of $\oU$ and those of $\Curl \oU$. 
Let us summarize what can be obtained for the minimization problem posed by \eqref{eq:minprob}:
\begin{enumerate}
	\item[]Case (I): $\mu>0,\,\lambda>0,\,L_c>0;\qquad \mu_c>0,\,\widehat{L}_c>0,\,q\geq 2$ \\[6pt]
		$\phantom{Case (I): }\hookrightarrow$ existence of minimizers along the lines presented in \cite{NBO_2014}.
	\item[]Case (II): $\mu>0,\,\lambda>0,\,L_c\ge 0;\qquad \mu_c=0,\,\widehat{L}_c>0,\,q>3$\\[6pt]
		$\phantom{Case (II:}\hookrightarrow$ existence of minimizers along the lines presented in \cite{NeffCMT} using \cite{NeffPompe_Korn}.
	\item[]Case (III): $\mu>0,\,\lambda>0,\,L_c>0;\qquad \mu_c=0,\,\widehat{L}_c>0,\,q\geq 2$\\[6pt]
	         $\phantom{Case (III\,:}\hookrightarrow$ outlined subsequently
	\item[]Case (IV): $\mu>0,\,\lambda>0,\,L_c>0;\qquad\mu_c=0,\,\widehat{L}_c=0$\\[6pt]
		$\phantom{Case (IV): }\hookrightarrow$ yet open
\end{enumerate}

We assume fixed boundary conditions of place at $\Gamma_D\subset\partial\Omega$, i.e. $\varphi(x)\ein_{_{\Gamma_D}}=x$. Consistent with this requirement we impose zero tangential straining at $\Gamma_D$ which means (taking the cross product of a matrix with a vector row wise)
\begin{align*}
	&(\oU-\id)\times \normal =0,\quad \text{where $\normal$ denotes the outer unit normal}\\
	&\;\;\;\Longleftrightarrow\;\;(\oU-\id).\tau=0\quad\text{ for all tangential directions } \tau \text{ at }\Gamma_D.
\end{align*}
The latter condition translates into
\begin{align}
	\oR^T\nabla\varphi.\tau=\tau \text{ at $\Gamma_D$}
\end{align}
and due to $\varphi(x)\ein_{_{\Gamma_D}}=x$ we have $\nabla\varphi(x).\tau=\tau$. Thus we require in fact
\begin{align}
	\oR.\tau\ein_{_{\Gamma_D}}=\tau\quad\Longleftrightarrow\quad\oR\ein_{_{\Gamma_D}}=\id.
\end{align}
Since the body force $f$ is non-vanishing, the solution $(\varphi,\oR)$ will be non-trivial. \\

\subsection{Outline for Case (III)}
We use the direct methods of the calculus of variations and only indicate those arguments which are needed in addition to the proof in \cite{NBO_2014} for case (I). At first, let us recall the incompatible Korn's inequality (cf. \cite{Starke13,Witsch15}):
\begin{align}\label{eq:Staff}
	\int_{\Omega}\norm{\sym P}^2+\norm{\Curl P}^2\,dV\geq c^+(\Omega)\cdot\int_{\Omega}\norm{P}^2\,dV
\end{align}
Applying \eqref{eq:Staff} to $P=\oU-\id$, we obtain a uniform $L^2$-bound on $\oU-\id$ (while only controlling the symmetric part locally). Considering infimizing sequences $\varphi_k$, $\oR_k$ we obtain (due to the fact that $\oU-\id$ is uniformly $L^2$-bounded), similar to \cite{NBO_2014}, eq. (42):
\begin{align}
	\varphi_k\xrightharpoonup{H^1(\Omega)}\varphi\quad\text{ and }\quad\varphi_k\xrightarrow{L^2(\Omega)}\varphi, 
\end{align}
and
\begin{align}
	\Curl\oU_k\xrightharpoonup{H(\Curl)}\Curl\oU\quad\text{ and }\quad \oR_k^T\nabla\varphi_k=\oU_k\xrightharpoonup{L^2(\Omega)}\oU.
\end{align}
The independent control of $\oR_k$ is provided by that of $\norm{K}^4$, which gives 
\begin{align}
	\oR_k\xrightarrow{L^2(\Omega)}\oR\quad\text{ and }\quad \Curl\oR_k\xrightharpoonup{L^2(\Omega)}\Curl\oR.
\end{align}
On passing to a subsequence, not relabelled, we obtain
\begin{align*}
	\varphi_k&\xrightarrow{L^2(\Omega)}\varphi,\\
	\nabla\varphi_k&\xrightharpoonup{L^2(\Omega)}\nabla\varphi,\\
	\oU_k&\xrightharpoonup{L^2(\Omega)}\oU=\oR^T\nabla\varphi,\\
	\Curl\oU_k&\xrightharpoonup{L^2(\Omega)}\Curl\oU,\\
	\Curl\oR_k&\xrightharpoonup{L^2(\Omega)}\Curl\oR.
\end{align*}
Moreover, 
\[\Big(\det\oU^2+(\det\oU^2)^{-1}-2\Big)=(\det\nabla\varphi)^2+\frac1{(\det\nabla\varphi)^2}-2,\]
since $\det(\oR^T\nabla\varphi)=\det\nabla\varphi$. In addition we note that 
\(
 \xi\mapsto(\xi^2+\frac1{\xi^2}-2) 
\) 
is convex, hence 
\[
 \Big((\det\nabla\varphi)^2+\frac1{(\det\nabla\varphi)^2}-2\Big)
\]
is polyconvex and thus weakly lower semicontinuous. 
Concerning the other summands in the energy \eqref{eq:energy}, we observe that $\norm{\sym(\oU-\id)}^2$ is convex in $\oU$ and both curvature contributions are quadratic. Hence, 
\begin{align}
\int_\Omega W(\varphi,\oR)\,dV\leq\liminf_{k\rightarrow\infty}\int_{\Omega}W(\varphi_k,\oR_k)\,dV
\end{align}
and the proof is complete.
\begin{remark}
For the general case regarding boundary conditions, i.e. $\varphi_{|_{\Gamma_D}}=\varphi_0$ we modify the energy functional to include a boundary contribution which realizes a weak coupling between microrotations and deformation gradient. We add
\begin{align}
\int_{\Gamma_D}\|(\oU-\id)\times\normal\|^2\,dS,\label{eq:new}
\end{align}
where again the cross product of a matrix with a vector is taken row wise. Then the NPW-Korn estimate \eqref{eq:Staff} holds with appropriate changes including the boundary contribution and the existence proof carries over. Note that
\begin{align}
\int_{\Gamma_D}\|(\oU-\id)\times\normal\|^2\,dS=0\Longleftrightarrow (\oU-\id)\times\normal=0 \text{ a.e.}\\
\intertext{which implies}
								  (\oU-\id)\tau=(\oR^T\nabla\varphi-\id).\tau=0\quad\text{for all tangential directions }\tau.
\end{align}
The advantage of \eqref{eq:new} is that we do not presuppose a direct knowledge about the coupling of microrotations and deformations at the boundary $\Gamma_D\subset \partial\Omega$. Moreover, it is in general not possible to prescribe 
\[
(\oR^T\nabla\varphi-\id).\tau=0 \quad\text{for all tangential directions }\tau\text{ at }\Gamma_D,
\]
since the condition $\nabla\varphi.\tau=\oR.\tau$ cannot be met if $\nabla \varphi$ is length changing on $\Gamma_D$.
\end{remark}

\section{Acknowledgement}
The basic ideas of this paper, which was intended to become a joint article, have been discussed with Claude Vall\'ee. He passed away in 2014, before the draft of the article was finished.

{\footnotesize
\section{Appendix}
As mentioned in the notation section, the vector product $x\times y$ of two vectors $x,y\in \R^3$ has a close connection with skew-symmetric matrices.
For $v\in\R^3$ fixed, we have 
\[
 v\times w=\mathrm{anti}(v).w=\matr{0&-v_3&v_2\\v_3&0&-v_1\\-v_2&v_1&0\\} w\qquad\forall\,w\in\R^3
\]
and for any skew-symmetric $A\in\Rdd$, there is $v\in\R^3$ such that $Aw=v\times w$. A more geometrical interpretation of the vector product is given by its characterization ``the vector product of $a,b\in\R^3$ is perpendicular to $a$ and $b$ both and its length is given by the area of the parallelogram spanned by $a$ and $b$'', which implies that the volume of a parallelepiped with sides $a, b ,c$ is given by $\sprod{a\times b,c}$, as expressed by the following lemma:
\begin{lemma}
\label{lem:vectorproduct}
 Let $a,b\in\R^3$. Then $a\times b$ is the unique element of $\R^3$ which satisfies 
\begin{align}
  \sprod{a\times b, v}=\vol(a,b,v)\qquad\text{for all }v\in\R^3 \label{vectorproduct}
\end{align}
\end{lemma}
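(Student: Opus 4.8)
The plan is to split the claim into two independent parts: first the \emph{existence} statement, namely that the vector $a\times b$ actually satisfies the defining relation $\sprod{a\times b,v}=\vol(a,b,v)$ for every $v\in\R^3$; and second the \emph{uniqueness} statement, that no other vector can satisfy this relation for all $v$.

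For the existence part I would simply verify the classical scalar-triple-product identity by writing everything in components. With the convention $(a\times b)_i=\epsilon_{ijk}a_jb_k$ already fixed in the notation section, the left-hand side becomes $\sprod{a\times b,v}=\epsilon_{ijk}a_jb_kv_i$, while the determinant of the matrix with columns $a,b,v$ expands as $\vol(a,b,v)=\det(a\,|\,b\,|\,v)=\epsilon_{ijk}a_ib_jv_k$. A single cyclic relabelling of the summation indices (using the cyclic invariance $\epsilon_{kij}=\epsilon_{ijk}$) identifies the two expressions, which establishes the identity. Alternatively, one may argue without indices: both sides are trilinear and alternating in $(a,b,v)$, so it suffices to check equality on the standard basis triple $(e_1,e_2,e_3)$, where both sides equal $1$.

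For the uniqueness part I would invoke nondegeneracy of the Euclidean inner product. Suppose $w\in\R^3$ satisfies $\sprod{w,v}=\vol(a,b,v)=\sprod{a\times b,v}$ for all $v\in\R^3$. Then $\sprod{w-a\times b,v}=0$ for every $v$; choosing in particular $v=w-a\times b$ gives $\norm{w-a\times b}^2=0$, hence $w=a\times b$.

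I do not expect any genuine obstacle here: the lemma is elementary and essentially records the geometric meaning of the determinant as an oriented volume. The only point requiring a little care is keeping the index orderings consistent between the cross-product convention and the column-determinant expansion of $\vol$; the basis-checking variant avoids even this bookkeeping and may be the cleaner route to present.
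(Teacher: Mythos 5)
Your proof is correct, but it takes precisely the route the paper explicitly declines to take. The paper's own proof begins ``Instead of giving componentwise calculations let us recall the elementary geometric definition of the vector product'' and then simply notes that perpendicularity to $a$ and $b$ together with the parallelogram-area length implies that $\sprod{a\times b,c}$ is the volume of the parallelepiped with sides $a,b,c$, i.e.\ $\vol(a,b,c)$; that is the whole argument. You instead verify the scalar-triple-product identity in coordinates (or, in your alternative, by trilinearity and alternation plus a check on the standard basis), which is exactly the computation the paper sidesteps. Your index conventions are consistent with the paper's, where $(a\times b)_i=\epsilon_{ijk}a_jb_k$ is implicit in the definition of the $\anti$ operator, so the bookkeeping you flag as the only delicate point does go through. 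You also prove the uniqueness half explicitly, via nondegeneracy of the Euclidean inner product; the paper's proof is silent on this, even though the statement asserts that $a\times b$ is the \emph{unique} such element, so on this point your write-up is the more complete of the two. In short: the paper's geometric argument buys brevity and intuition, while yours buys self-containedness and actually covers every clause of the lemma.
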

\begin{proof}
Instead of giving componentwise calculations let us recall the elementary geometric definition of the vector product: ``The vector product of $a,b\in\R^3$ is perpendicular to $a$ and $b$ both and its length is given by the area of the parallelogram spanned by $a$ and $b$'', which implies that the volume of a parallelepiped with sides $a, b ,c$ is given by $\sprod{a\times b,c}$.
\end{proof}
\begin{lemma}\label{lem:detadj}
Let $u,v\in\R^3$ and $A\in\R^{3\times 3}$ with adjugate $\Adj A$. Then we have
\begin{align*}
	\vol(Au,Av,Aw)&=\det A\vol(u,v,w),\\
	\vol(Au,Av,w)&=\vol(u,v,\Adj A w).
\end{align*}
\end{lemma}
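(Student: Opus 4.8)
The plan is to reduce everything to the single fact that $\vol(u,v,w)=\det(u\,|\,v\,|\,w)$, the determinant of the matrix whose columns are the three arguments, together with multiplicativity of the determinant. For the first identity I would observe that $(Au\,|\,Av\,|\,Aw)=A\,(u\,|\,v\,|\,w)$, so that
\[
 \vol(Au,Av,Aw)=\det\big(A\,(u\,|\,v\,|\,w)\big)=\det A\cdot\det(u\,|\,v\,|\,w)=\det A\,\vol(u,v,w).
\]
This needs no case distinction and holds for every $A\in\R^{3\times 3}$ and all $u,v,w\in\R^3$.

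For the second identity my plan is to treat invertible $A$ first and then extend by continuity. If $A$ is invertible, then $\Adj A=\det A\cdot A^{-1}$, and using the $3$-linearity of $\vol$ in its last slot (Definition \ref{def:vol}) followed by the first identity applied to the triple $(u,v,A^{-1}w)$, I would compute
\[
 \vol(u,v,\Adj A\,w)=\det A\,\vol(u,v,A^{-1}w)=\det A\cdot\frac{\vol(Au,Av,w)}{\det A}=\vol(Au,Av,w),
\]
where the middle equality rewrites $\vol(Au,Av,A(A^{-1}w))=\det A\,\vol(u,v,A^{-1}w)$ from the first identity and uses $A(A^{-1}w)=w$. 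To reach arbitrary (possibly singular) $A$, I would note that for fixed $u,v,w$ both sides are polynomial, hence continuous, in the entries of $A$: the left side because $\vol$ is multilinear and the entries of $\Adj A$ are polynomials in those of $A$, the right side because it is a determinant. Since the two polynomials agree on the dense set of invertible matrices, they agree on all of $\R^{3\times 3}$.

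The only genuine subtlety is this passage to singular $A$, and the density argument disposes of it cleanly; everything else is bookkeeping. As an alternative I could avoid invertibility altogether: by Lemma \ref{lem:vectorproduct} the second identity is equivalent to the cofactor identity $Au\times Av=(\Adj A)^T(u\times v)$, since $\vol(Au,Av,w)=\sprod{Au\times Av,w}$ and $\vol(u,v,\Adj A\,w)=\sprod{(\Adj A)^T(u\times v),w}$, and this can be verified directly on the basis pairs $(e_i,e_j)$ by matching the components of $Ae_i\times Ae_j$ with the $2\times2$ minors of $A$. I would nonetheless prefer the density route, both because it is shorter and because it keeps the present lemma logically prior to the cross-product manipulations of Lemma \ref{lem:baccab}, which in turn invoke this very identity.
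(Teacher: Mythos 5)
Your proof is correct, and for the first identity it coincides with the paper's: both write $(Au\,|\,Av\,|\,Aw)=A\,(u\,|\,v\,|\,w)$ and use multiplicativity of the determinant. For the second identity the core manipulation is also the same --- everything hinges on $A\,\Adj A=\det A\cdot\id$ together with the first identity --- but you read it in the opposite direction (via $\Adj A=\det A\cdot A^{-1}$) and, more importantly, you add an explicit continuity/density argument to cover singular $A$. The paper's own proof writes $\vol(Au,Av,w)=\frac{1}{\det A}\vol(Au,Av,\det A\,w)=\frac{1}{\det A}\vol(Au,Av,A\,\Adj A\,w)$ and thus divides by $\det A$, so as written it only establishes the identity for invertible $A$, even though the lemma is stated for arbitrary $A\in\R^{3\times 3}$ and is later invoked (in Lemma \ref{lem:baccab}, where $A$ is not assumed invertible) in that generality. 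Your polynomial-identity argument closes exactly that gap at no real cost, and your remark about keeping this lemma logically prior to the cross-product identities is sound; the alternative componentwise verification of $Au\times Av=(\Adj A)^T(u\times v)$ would work too but is more bookkeeping for no gain.
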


\begin{proof}
Let $u,v,w\in\R^3$ and $A\in\R^{3\times 3}$. According to the definition of $\vol$ (Definition  \ref{def:vol}) we have
\begin{align*}
	\vol(Au,Av,Aw)&=\det(Au|Av|Aw)=\det\Big[A(u|v|w)\Big]\\
	                          &=(\det A)\det(u|v|w)=(\det A) \vol(u,v,w).
\end{align*}
Using this identity, the multilinearity of $\vol$ and the well known identity $A\,\Adj A=\det A\,\id$ we get
\begin{align*}
 \vol(Au,Av,w)&=\frac{1}{\det A} \vol(Au,Av,\det A\,w)\\
		      &=\frac{1}{\det A}\vol(Au,Av,A \Adj A w)=\frac{\det A}{\det A}\vol(u,v,\Adj A w).\qedhere
\end{align*}
\end{proof}

\begin{lemma}\label{lem:tr}
Let $u,v,w\in\R^3$ and $A\in\R^{3\times 3}$. Then
\[
	\vol(Au,v,w)+\vol(u,Av,w)+\vol(u,v,Aw)=(\tr A)\,\vol(u,v,w).
\]
\end{lemma}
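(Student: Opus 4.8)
The plan is to recognize the left-hand side as the directional derivative of the determinant in the direction $A$, and the trace as the first-order coefficient in the expansion of $\det(\id+tA)$. The starting point is the first identity of Lemma \ref{lem:detadj}, already established, applied not to $A$ itself but to the perturbed matrix $\id+tA$ for a real parameter $t$. This yields
\[
 \vol\big((\id+tA)u,(\id+tA)v,(\id+tA)w\big)=\det(\id+tA)\,\vol(u,v,w)
\]
for all $t\in\R$ and all $u,v,w\in\R^3$. Both sides are polynomials in $t$, and the strategy is simply to compare the coefficients of $t^1$.

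First I would expand the left-hand side using the multilinearity of $\vol$ (its definition as a $3$-linear form). Writing $(\id+tA)u=u+tAu$ and likewise for $v$ and $w$, the term linear in $t$ collects exactly those contributions in which $A$ has been applied to precisely one of the three arguments, giving
\[
 \vol(Au,v,w)+\vol(u,Av,w)+\vol(u,v,Aw),
\]
which is the left-hand side of the claimed identity. On the right-hand side I would expand $\det(\id+tA)$ as a polynomial in $t$; for a $3\times3$ matrix this is $\det(\id+tA)=1+t\,\tr A+t^2\,c_2+t^3\det A$, where $c_2$ is the sum of the principal $2\times2$ minors of $A$. Hence the coefficient of $t^1$ on the right is $(\tr A)\,\vol(u,v,w)$. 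Equating the two $t^1$-coefficients gives the lemma.

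There is no genuine obstacle here; the only point that deserves a line of justification is the elementary fact that the linear-in-$t$ term of $\det(\id+tA)$ is $\tr A$, which one sees at once by writing out the $3\times3$ determinant of $\id+tA$ explicitly and reading off the coefficient of $t$. (Equivalently, one may phrase the whole argument as differentiation at $t=0$, using the Jacobi formula $\frac{d}{dt}\det(\id+tA)\big|_{t=0}=\tr A$ together with the product rule for the multilinear form $\vol$.) As an alternative proof that avoids the parameter $t$ altogether, one could note that both sides are linear in $A$ and then verify the identity on the spanning set of rank-one matrices $A=e_p\otimes e_q$; however, the perturbation argument above is shorter and relies only on Lemma \ref{lem:detadj} and the multilinearity already in hand.
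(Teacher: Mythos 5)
Your proof is correct, and it takes a genuinely different route from the paper. The paper proves Lemma \ref{lem:tr} by brute force: it writes out the three determinants $\vol(Au,v,w)$, $\vol(u,Av,w)$, $\vol(u,v,Aw)$ in components, expands all $54$ monomials, observes that every term involving an off-diagonal entry $a_{ij}$ ($i\neq j$) cancels in the sum, and collects the surviving diagonal terms into $(\tr A)\det(u\,|\,v\,|\,w)$. You instead apply the first identity of Lemma \ref{lem:detadj} to the perturbed matrix $\id+tA$, expand both sides as polynomials in $t$ using the multilinearity of $\vol$ and the expansion $\det(\id+tA)=1+t\,\tr A+t^2c_2+t^3\det A$, and compare the coefficients of $t$. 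Your argument is shorter and more conceptual, and it makes transparent \emph{why} the trace appears (it is the first elementary symmetric function of $A$, i.e.\ the linear coefficient of the characteristic polynomial); the one ingredient you use beyond the paper's toolkit is the identification of that linear coefficient with $\tr A$, which you correctly flag as the only step needing a line of justification and which is indeed elementary for $3\times3$ matrices. The paper's computation, by contrast, is entirely self-contained and mechanical, requiring nothing but the definition of the determinant, at the cost of a page of index bookkeeping. Both proofs are complete; yours would be a legitimate (and arguably preferable) replacement.
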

\begin{proof}
\begin{align*}
&\vol(Au,v,w)+\vol(u,Av,w)+\vol(u,v,Aw)\\
&\hspace{0.5cm}=\det\matr{a_{11}u_1+a_{12}u_2+a_{13}u_3&v_1&w_1\\a_{21}u_1+a_{22}u_2+a_{23}u_3&v_2&w_2\\a_{31}u_1+a_{32}u_2+a_{33}u_3&v_3&w_3}+ \det \matr{u_1&a_{11}v_1+a_{12}v_2+a_{13}v_3&w_1\\u_2&a_{21}v_1+a_{22}v_2+a_{23}v_3&w_2\\u_3&a_{31}v_1+a_{32}v_2+a_{33}v_3&w_3}\\
&\hspace{1cm}+\det\matr{u_1&v_1&a_{11}w_1+a_{12}w_2+a_{13}w_3\\u_2&v_2&a_{21}w_1+a_{22}w_2+a_{23}w_3\\u_3&v_3&a_{31}w_1+a_{32}w_2+a_{33}w_3}\\
&=a_{11}u_1v_2w_3+a_{12}u_2v_2w_3+a_{13}u_3v_2w_3+a_{21}u_1v_3w_1\\
&\hspace{1.5cm}+a_{22}u_2v_3w_1+a_{23}u_3v_3w_1+a_{31}u_1v_1w_2+a_{32}u_2v_1w_2+a_{33}u_3v_1w_2\\
&\hspace{0.5cm}-a_{11}u_1v_3w_2-a_{12}u_2v_3w_2-a_{13}u_3v_3w_2-a_{21}u_1v_1w_3\\
&\hspace{1.5cm}-a_{22}u_2v_1w_3-a_{23}u_3v_1w_3-a_{31}u_1v_2w_1-a_{32}u_2v_2w_1-a_{33}u_3v_2w_1\\
&\hspace{0.5cm}+a_{11}u_3v_1w_2+a_{12}u_3v_2w_2+a_{13}u_3v_3w_2+a_{21}u_1v_1w_3\\
&\hspace{1.5cm}+a_{22}u_1v_2w_3+a_{23}u_1v_3w_3+a_{31}u_2v_1w_1+a_{32}u_2v_2w_1+a_{33}u_2v_3w_1\\
&\hspace{0.5cm}-a_{11}u_2v_1w_3-a_{12}u_2v_2w_3-a_{13}u_2v_3w_3-a_{21}u_3v_1w_1\\
&\hspace{1.5cm}-a_{22}u_3v_2w_1-a_{23}u_3v_3w_1-a_{31}u_1v_1w_2-a_{32}u_1v_2w_2-a_{33}u_1v_3w_2\\
&\hspace{0.5cm}+a_{11}u_2v_3w_1+a_{12}u_2v_3w_2+a_{13}u_2v_3w_3+a_{21}u_3v_1w_1\\
&\hspace{1.5cm}+a_{22}u_3v_1w_2+a_{23}u_3v_1w_3+a_{31}u_1v_2w_1+a_{32}u_1v_2w_2+a_{33}u_1v_2w_3\\
&\hspace{0.5cm}-a_{11}u_3v_2w_1-a_{12}u_3v_2w_2-a_{13}u_3v_2w_3-a_{21}u_1v_3w_1\\
&\hspace{1.5cm}-a_{22}u_1v_3w_2-a_{23}u_1v_3w_3-a_{31}u_2v_1w_1-a_{32}u_2v_1w_2-a_{33}u_2v_1w_3\\
&=a_{11}(u_1v_2w_3-u_1v_3w_2+u_3v_1w_2-u_2v_1w_3+u_2v_3w_1-u_3v_2w_1)\\
&\hspace{0.75cm}+a_{22}(u_2v_3w_1-u_2v_1w_3+u_1v_2w_3-u_3v_2w_1+u_3v_1w_2-u_1v_3w_2)\\
&\hspace{1.5cm}+a_{33}(u_3v_1w_2-u_3v_2w_1+u_2v_3w_1-u_1v_3w_2+u_1v_2w_3-u_2v_1w_3)\\
&=(\tr A)\,\det(u\,|v\,|w)\qedhere
\end{align*}
\end{proof}
}

\footnotesize{\bibliographystyle{abbrv}
\bibliography{literatur_Osterbrink}}

\end{document}